\numberwithin{equation}{section}
\let\OLDtableofcontents\tableofcontents
\renewcommand\tableofcontents[1]{%
	{\baselineskip 0.5ex %
		\OLDtableofcontents{#1}}%
}
\let\OLDthebibliography\thebibliography
\renewcommand\thebibliography[1]{%
	\setstretch{1.079} % 1/2 goldenratio
	\OLDthebibliography{#1}%
	\small %
	\setlength{\itemsep}{0.2\baselineskip} % goldenratio for separating items %
}
\let\OLDfootnote\footnote
\renewcommand\footnote[1]{%
	\setlength{\footnotesep}{0.75\baselineskip}%
	{\footnotesize \OLDfootnote{#1}}%
}
\setlist[enumerate]{noitemsep, label=(\arabic*), ref=(\arabic*)}
\newlist{condlist}{enumerate}{2}
\setlist[condlist,1]{noitemsep, topsep=0pt, label=(\arabic*), ref=(\arabic*)}
\setlist[condlist,2]{noitemsep, label=(\alph*), ref=(\arabic{condlisti}.\alph*)}
\crefname{condlisti}{condition}{conditions}
\crefname{condlistii}{condition}{conditions}
\newlist{propertylist}{enumerate}{1}
\setlist[propertylist,1]{noitemsep, topsep=0pt, label=(\arabic*), ref=(\arabic*)}
\crefname{propertylisti}{Property}{Properties}
\renewcommand\thesection{\arabic{section}}
\renewcommand\thesubsection{\arabic{subsection}}
\def\p@subsection{\thesection.}
\def\p@subsubsection{\thesection.\thesubsection.}
\theoremstyle{plain}
\newtheorem{thm}{Theorem}
\newtheorem{lemma}{Lemma}[section]
\theoremstyle{definition}
\newtheorem{definition}{Definition}[section]
\theoremstyle{remark}
\newtheorem{remark}{Remark}[section]
\crefname{equation}{Eq.}{Eqs.}
\crefname{section}{Sec.}{Secs.}
\crefname{appendix}{Appendix}{Appendices}
\crefname{figure}{Fig.}{Figs.}
\crefname{table}{Table}{Tables}
\crefname{definition}{Def.}{Defs.}
\crefname{prop}{Prop.}{Props.}
\crefname{lemma}{Lemma}{Lemmas}
\crefname{corollary}{Cor.}{Cors.}
\crefname{thm}{Theorem}{Theorems}
\crefname{remark}{Remark}{Remarks}
\crefname{ass}{Assumptions}{Assumptions}
\crefname{property}{Properties}{Properties}
\newcommand{\be}{\begin{equation}}
\newcommand{\ee}{\end{equation}}
\newcommand{\lb}{\left}
\newcommand{\rb}{\right}
\newcommand{\lra}{\leftrightarrow}
\newcommand{\mc}{\mathcal}
\newcommand{\f}[2]{\frac{#1}{#2}}
\newcommand{\ms}{\mathscr}
\newcommand{\mf}{\mathfrak}
\newcommand{\bb}{\mathbb}
\newcommand{\eqsp}{\, ,\quad} %shrtct for space in eqns
\newcommand{\hr}{\begin{center}* * *\end{center}}
\newcommand{\Lie}{\pounds} %shrtct for Lie derivative
\newcommand{\lie}{\pounds} %shrtct for Lie derivative (alternate)
\newcommand{\defn}{\mathrel{\mathop:}=} %shrtct for definition operator
\newcommand{\union}{\cup} % union symbol
\newcommand{\inter}{\cap} % intersection
\let\oldsetminus\setminus
\renewcommand{\setminus}{\!\oldsetminus\!} 
\let\oldint\int
\renewcommand{\int}{\oldint\limits}
\let\oldlim\lim
\renewcommand{\lim}{\oldlim\limits}
\renewcommand{\bar}{\overline}
\newcommand{\scri}{\ms I}
\newcommand{\hyp}{\ms H}
\newcommand{\spi}{\mf{spi}}
\newcommand{\dd}[1]{\boldsymbol{#1}} %notation for dir-dep limits
\newcommand{\df}[1]{#1} %notation for diff. forms; don't use since not needed
\newcommand{\nfrac}[2]{{{}^#1\!\!/\!_#2}}
\newcommand{\half}{\nfrac{1}{2}}
\newcommand{\Omh}{\Omega^\half} % order notation
\newcommand{\pb}[1]{\underleftarrow{#1}} % shrct for pullback
\newcommand{\rsub}[1]{\scriptscriptstyle{\rm #1}}
\newcommand{\bs}{{\rsub{(BS)}}}
\newcommand{\dv}[1]{{}^\star\!{\dd{#1}}}
\begin{document}

\setstretch{1.2}

% Title Page

\title{Asymptotic symmetries and charges at spatial infinity in general relativity}

\author{Kartik Prabhu}
\email{kartikprabhu@cornell.edu}
\affiliation{Cornell Laboratory for Accelerator-based Sciences and Education (CLASSE),\\ Cornell University, Ithaca, NY 14853, USA}

\author{Ibrahim Shehzad}
\email{is354@cornell.edu}
\affiliation{Department of Physics, Cornell University, Ithaca, NY 14853, USA}

%%============================================================
\begin{abstract}
We analyze the asymptotic symmetries and their associated charges at spatial infinity in \(4\)-dimensional asymptotically-flat spacetimes. We use the covariant formalism of Ashtekar and Hansen where the asymptotic fields and symmetries live on the \(3\)-manifold of spatial directions at spatial infinity, represented by a timelike unit-hyperboloid (or de Sitter space). Using the covariant phase space formalism, we derive formulae for the charges corresponding to asymptotic supertranslations and Lorentz symmetries at spatial infinity. With the motivation of, eventually, proving that these charges match with those defined on null infinity --- as has been conjectured by Strominger --- we do not impose any restrictions on the choice of conformal factor in contrast to previous work on this problem. Since we work with a general conformal factor we expect that our charge expressions will be more suitable to prove the matching of the Lorentz charges at spatial infinity to those defined on null infinity, as has been recently shown for the supertranslation charges.
\end{abstract}

\maketitle
\tableofcontents

%%============================================================
\section{Introduction}
\label{sec:intro}

In general relativity, the asymptotic symmetries of asymptotically-flat spacetimes at both past and future null infinity are elements of the infinite-dimensional Bondi-Metzner-Sachs (BMS) group \cite{BBM, Sachs1} (see \cite{Ashtekar:2014zsa, AK} for recent reviews). It has been conjectured by Strominger \cite{Stro-CK-match} that the (a priori independent) BMS groups at past and future null infinity are related via an antipodal reflection near spatial infinity. This matching relation gives a \emph{global} ``diagonal'' asymptotic symmetry group for general relativity. If similar matching conditions relate the gravitational fields, then there exist infinitely many conservation laws in classical gravitational scattering between the incoming fluxes associated with the BMS group at past null infinity and the outgoing fluxes of the corresponding (antipodally identified) BMS group at future null infinity. These conservation laws are also related to soft graviton theorems \cite{He:2014laa,Strominger:2014pwa,Kapec:2015vwa,Avery:2015gxa,Campiglia:2015kxa,Campoleoni:2017mbt}, gravitational memory effects \cite{He:2014laa,Strominger:2014pwa,Pasterski:2015tva,HIW,Mao:2017wvx,Pate:2017fgt,Chatterjee:2017zeb} and the black hole information paradox \cite{Hawking:2016msc,Strominger:2017aeh,Hawking:2016sgy} (see \cite{Strominger:2017zoo} for a detailed review of recent developments and a complete list of references).

Such matching conditions on the asymptotic symmetries and fields have been shown in Maxwell theory on a background Minkowski spacetime \cite{CE} and in general asymptotically-flat spacetimes \cite{KP-EM-match}. In the gravitational case, the matching of the supertranslation symmetries and supermomentum charges has also be proven for linearized perturbations on a Minkowski background \cite{Tro} and in general asymptotically-flat spacetimes \cite{KP-GR-match}. For the translation symmetries these reduce to the much older result of \cite{Ash-Mag-Ash} which shows that the Bondi \(4\)-momentum on future and past null infinity matches the \(4\)-momentum at spatial infinity.

The main technique used in \cite{CE,KP-EM-match,Tro,KP-GR-match} to prove these matching conditions is to ``interpolate'' between the symmetries and charges at past and future null infinities using the field equations and the asymptotic symmetries and charges defined near spatial infinity. In a background Minkowski spacetime this analysis can be done using asymptotic Bondi-Sachs coordinates near each null infinity and asymptotic Beig-Schmidt coordinates near spatial infinity. Using the explicit transformations between these coordinate systems the matching conditions can be shown to hold for Maxwell fields and linearized gravity on Minkowski spacetime \cite{CE,Tro}. But in general asymptotically-flat spacetimes the transformations between the asymptotic coordinates is not known explicitly. In this case the covariant formulation of asymptotic-flatness given by Ashtekar and Hansen \cite{AH}, which treats both null and spatial infinities in a unified spacetime-covariant manner, has proven fruitful to analyze the matching of the symmetries and charges \cite{KP-EM-match,KP-GR-match}.

However, for the charges associated with the Lorentz symmetries such matching conditions between past and future null infinity have not yet been proven, except for the case of stationary spacetimes \cite{AS-ang-mom}. With an eye towards establishing these conjectured matching conditions for Lorentz symmetries and charges we revisit the formulation of the asymptotic symmetries and charges at spatial infinity.

The asymptotic behaviour at spatial infinity can be studied using many different (but related) formalisms. Since our primary motivation is to, ultimately, make contact with null infinity it will be more useful to use a spacetime covariant formalism without using any \((3+1)\) decomposition of the spacetime by spacelike hypersurfaces \cite{ADM, ADMG, RT-parity}. Such a \(4\)-dimensional formulation of asymptotic-flatness at spatial infinity can be be given using suitable asymptotic coordinates as formulated by Beig and Schmidt \cite{Beig-Schmidt}. The asymptotic symmetries and charges using the asymptotic expansion of the metric in these coordinates have been worked out in detail \cite{Beig-Schmidt, CDV-Lorentz, CD}. But as mentioned above, the relation between the Beig-Schmidt coordinates and the coordinates adapted to null infinity (like the Bondi-Sachs coordinates) is not known in general spacetimes. Thus, we will use the coordinate independent formalism of Ashtekar and Hansen \cite{AH,Ash-in-Held} (\cref{def:AH}) to investigate the symmetries and their associated charges at spatial infinity.\footnote{The relation between the Ashtekar-Hansen formalism and the Beig-Schmidt coordinates is summarized in \cref{sec:coord}.}

The asymptotic behaviour of the gravitational field for any asymptotically-flat spacetime is most conveniently described in a conformally-related unphysical spacetime, the Penrose conformal-completion. In the unphysical spacetime, null infinities \(\scri^\pm\) are smooth null boundaries while spatial infinity is a boundary point \(i^0\) which is the vertex of ``the light cone at infinity'' formed by \(\scri^\pm\). For Minkowski spacetime the unphysical spacetime is smooth (in fact, analytic) at \(i^0\). However, in more general spacetimes, the unphysical metric is not even once-differentiable at spatial infinity unless the ADM mass of the spacetime vanishes \cite{AH}, and the unphysical spacetime manifold \emph{does not} have a smooth differential structure at \(i^0\). Thus, in the Ashtekar-Hansen formalism, instead of working directly at the point \(i^0\) where sufficiently smooth structure is unavailable, one works on a ``blowup'' --- the space of spatial directions at \(i^0\) --- given by a timelike-unit-hyperboloid \(\hyp\) in the tangent space at \(i^0\). Suitably conformally rescaled fields, whose limits to \(i^0\) depend on the direction of approach, induce \emph{smooth} fields on \(\hyp\) and we can study these smooth limiting fields using standard differential calculus on \(\hyp\). For instance, in Maxwell theory the rescaled field tensor \(\Omega F_{ab}\) and in general relativity the rescaled (unphysical) Weyl tensor \(\Omh C_{abcd}\) (where \(\Omega\) is the conformal factor used in the Penrose conformal completion) admit regular direction-dependent limits to \(i^0\), and these fields induce smooth tensor fields on \(\hyp\). Similarly, the Maxwell gauge transformations and vector fields in the physical spacetime (suitably rescaled) admit regular direction-dependent limits which generate the asymptotic symmetries at \(i^0\) (see \cref{sec:spi-symm}).

The asymptotic symmetries in general relativity at spatial infinity have also been studied in detail in the Ashtekar-Hansen formalism \cite{AH,Ash-in-Held}. However in deriving the charges associated with these symmetries Ashtekar and Hansen reduced the asymptotic symmetry algebra from the infinite-dimensional \(\spi\) algebra to the Poincar\'e algebra consisting only of translations and Lorentz transformations. This reduction was accomplished by demanding that the ``leading order'' magnetic part of the Weyl tensor, given by a tensor \(\dd B_{ab}\) on \(\hyp\) (see \cref{eq:EB-defn}), vanish and additionally choosing the conformal factor near \(i^0\) so that the tensor potential \(\dd K_{ab}\) for \(\dd B_{ab}\) also vanishes (see \cref{rem:GR-gauge-choice}). This restriction was also imposed in \cite{MMMV-Lorentz,CDV-Lorentz}. In the work of Comp\`ere and Dehouck in \cite{CD}, the condition \(\dd B_{ab} = 0\) was not imposed however, they also specialized to a conformal factor where the trace \(\dd h^{ab} \dd K_{ab}\) (where $\dd{h}^{ab}$ denotes the inverse of the metric on $\hyp$) was set to vanish. As we will show below (see \cref{sec:conf-charges}) the charges of the Lorentz symmetries at spatial infinity are not conformally-invariant but shift by the charge of a supertranslation. This is entirely analogous to the supertranslation ambiguities in the Lorentz charges at null infinity. Thus, when matching the Lorentz charges at spatial infinity to those at past and future null infinity, one would need to perform this matching in the ``same'' choice of conformal factor in all three regions. A priori, it is not clear what the special choices of conformal factor chosen in the above mentioned analyses imply at null infinity. Thus, we will not impose any such restrictions on the conformal factor and not impose any conditions on \(\dd K_{ab}\) (apart from its equations of motion arising from the Einstein equation) in our analysis. As we will show, one peculiar consequence of keeping a completely unrestricted conformal factor will be that our charges will not be exactly conserved but will have a non-vanishing flux through regions of \(\hyp\) (except for pure translations). Thus, these charges are not associated with the point \(i^0\) at spatial infinity, but with cross-sections of the ``blowup'' \(\hyp\). This is not a serious drawback; as shown in \cite{KP-EM-match,KP-GR-match} for matching the symmetries and charges at null infinity, one only requires that the \emph{total} flux of the charges through all of \(\hyp\) vanish but there can be a non-vanishing flux through local regions of \(\hyp\). Thus, our main goal in this work is to analyze the symmetries and charges in general relativity without imposing any restrictions on the choice of conformal factor near spatial infinity.

\hr

In our analysis of the asymptotic charges we will use the \emph{covariant phase space} formalism described below. Since the relevant quantities in the covariant phase space are defined in terms of the physical metric and their perturbations, we first analyze the conditions on the corresponding unphysical quantities so that they preserve the asymptotic-flatness conditions and the universal structure at \(i^0\) (\cref{sec:pert}). To derive the asymptotic symmetry algebra we then consider a physical metric perturbation \(\Lie_\xi \hat g_{ab}\) generated by an infinitesimal diffeomorphism and demand that it preserve the asymptotic conditions in the unphysical spacetime in the limit to \(i^0\). This will provide us with the following description of the asymptotic symmetries at \(i^0\) (\cref{sec:spi-symm}). The asymptotic symmetry algebra \(\spi\) is parametrized by a pair \((\dd f, \dd X^a)\) where \(\dd f\) is any smooth function and \(\dd X^a\) is a Killing field on \(\hyp\). The function \(\dd f\) parametrizes the supertranslations and \(\dd X^a\) parametrize the Lorentz symmetries. The \(\spi\) algebra is then a semi-direct sum of the Lorentz algebra with the infinite-dimensional abelian subalgebra of supertranslations. Note that this is the same as the asymptotic symmetries derived in \cite{AH,Ash-in-Held}. The only difference in our analysis is that we obtain the symmetries by analyzing the conditions on diffeomorphisms in the physical spacetime instead of using the unphysical spacetime directly as in \cite{AH,Ash-in-Held}.

To obtain the charges associated with these symmetries, the primary quantity of interest is the \emph{symplectic current} derived from the Lagrangian of a theory (see, \cite{LW,WZ} for details). The symplectic current \(\df\omega(\hat g; \delta_1 \hat g, \delta_2\hat g)\), is a local and covariant \(3\)-form which is an antisymmetric bilinear in two metric perturbations, $\delta \hat g$ on the physical spacetime. It can be shown that when the second perturbation \(\delta \hat g_{ab} = \Lie_\xi \hat g_{ab}\) is the perturbation corresponding to an infinitesimal diffeomorphism generated by a vector field \(\xi^a\) we have
\be \label{eq:sympcurrent-charge}
    \omega(\hat{g}; \delta \hat{g}, \lie_{\xi} \hat{g}) = d[\delta Q_{\xi} - \xi \cdot \theta(\delta \hat g)]\,,
\ee
where we have assumed that \(\hat g_{ab}\) satisfies the equations of motion and \(\delta \hat g_{ab}\) satisfies the linearized equations of motion. The \(2\)-form \(Q_\xi\) is the \emph{Noether charge} associated with the vector field \(\xi^a\) and the \(3\)-form \(\theta(\delta \hat g)\) is the \emph{symplectic potential} \cite{LW,WZ}. If we integrate \cref{eq:sympcurrent-charge} over a \(3\)-dimensional surface \(\Sigma\) with boundary \(\partial\Sigma\) we get
\begin{equation} \label{eq:pertcharge}
    \int_{\Sigma}\omega[\hat{g};\delta \hat{g}, \lie_{\xi}\hat{g}] = \int_{\partial \Sigma} \delta Q_{\xi} - \xi \cdot \theta(\delta \hat g) \,.
\end{equation}
To define the asymptotic charges at spatial infinity, we would like to evaluate \cref{eq:pertcharge} when the surface \(\Sigma\) extends to a suitably regular \(3\)-surface at \(i^0\) in the unphysical spacetime. Given the low amount of differentiability at \(i^0\) the appropriate condition is that \(\Sigma\) extends to a \(C^{>1}\) surface at \(i^0\). The limit of the boundary \(\partial\Sigma\) to \(i^0\) corresponds to a \(2\)-sphere cross-section \(S\) of the unit-hyperboloid \(\hyp\) in the Ashtekar-Hansen formalism. Then, the limiting integral on the right-hand-side of \cref{eq:pertcharge} (with the asymptotic conditions imposed on the metric perturbations as well as the symmetries) will define a perturbed charge on \(S\) associated with the asymptotic symmetry generated by \(\xi^a\). However, even though the explicit expressions for the integrand on the right-hand-side of \cref{eq:pertcharge} are well-known (see for instance \cite{WZ}), computing this limiting integral is difficult. So we will use an alternative strategy described next.

We will show that with the appropriate asymptotic-flatness conditions at \(i^0\), the symplectic current \(3\)-form \(\omega \equiv \omega_{abc}\) is such that \(\Omega^{\nfrac{3}{2}} \omega_{abc}\) has a direction-dependent limit to \(i^0\). The pullback of this limit to \(\hyp\), which we denote by \(\pb{\dd\omega}\), defines a symplectic current on \(\hyp\). We show that when one of the perturbations in this symplectic current is generated by an asymptotic \(\spi\) symmetry \((\dd f, \dd X^a)\), we have
\be
    \pb{\dd\omega}(g; \delta g, \delta_{(\dd f, \dd X)} g) = - \dd\varepsilon_3 \dd D^a \dd Q_a (g; \delta g, (\dd f, \dd X))\,,
\ee
where \(\dd\varepsilon_3\) and \(\dd D\) are the volume element and covariant derivative on \(\hyp\). The covector \(\dd Q_a (g; \delta g, (\dd f, \dd X))\) is a local and covariant functional of the background fields corresponding to the asymptotic (unphysical) metric \(g_{ab}\), and linear in the asymptotic (unphysical) metric perturbations \(\delta g_{ab}\) and the asymptotic symmetry parametrized by \((\dd f, \dd X^a)\). Thus, we can write the symplectic current, with one perturbation generated by an asymptotic symmetry, as a total derivative on \(\hyp\). Then, in analogy to \cref{eq:pertcharge}, we define the perturbed charge on a cross-section \(S\) of \(\hyp\) by the integral
\be\label{eq:pertcharge-hyp}
    \int_S \dd\varepsilon_2 \dd u^a \dd Q_a (g; \delta g, (\dd f, \dd X)\,,
\ee
where \(\dd\varepsilon_2\) is the area element and \(\dd u^a\) is a unit-timelike normal to the cross-section \(S\) within \(\hyp\). We then show that when the asymptotic symmetry is a supertranslation \(\dd f\), the quantity \(\dd Q_a (g; \delta g, \dd f)\) is integrable, i.e, it can be written as the \(\delta\) of some covector which is itself a local and covariant functional of the asymptotic fields and supertranslation symmetries. Then ``integrating'' \cref{eq:pertcharge-hyp} in the space of asymptotic fields, we can define a charge associated with the supertranslations on any cross-section \(S\) of \(\hyp\) (see \cref{sec:st}). When the asymptotic symmetry is a Lorentz symmetry parameterized by a Killing vector field \(\dd X^a\) on \(\hyp\), \cref{eq:pertcharge-hyp} \emph{cannot} be written as the \(\delta\) of some quantity (unless we restrict to the choice of conformal factor where \(\dd h^{ab} \dd K_{ab} = 0\) as described above). In this case, we will adapt the prescription by Wald and Zoupas \cite{WZ} to define an integrable charge for Lorentz symmetries (\cref{sec:lorentz-charge}). Then the change of these charges over a region \(\Delta\hyp\) bounded by two cross-sections provides a flux formula for these charges. In general, these fluxes will be non-vanishing (except for translation symmetries) unless we again restrict to the conformal factor where \(\dd h^{ab} \dd K_{ab} = 0\). However, as mentioned above, from the point of view of matching these charges to those on null infinity, the special conformal choices might not be convenient and it is not necessary to have exactly conserved charges on \(\hyp\). Thus, we will not restrict the conformal factor in any way and work with charges which can have non-trivial fluxes through some region of \(\hyp\).

\hr

The rest of this paper is organized as follows. In \cref{sec:AH} we recall the definition of asymptotic-flatness at spatial infinity in terms of an Ashtekar-Hansen structure. To illustrate our approach outlined above we first study the simpler case of Maxwell fields at spatial infinity, and derive the associated symmetries and charges in \cref{sec:Maxwell}. In \cref{sec:GR-eom} we then consider the asymptotic gravitational fields and Einstein equations at spatial infinity. We also describe the universal structure, that is the structure that is common to all spacetimes which are asymptotically-flat at \(i^0\), in \cref{sec:univ-str}. In \cref{sec:pert} we analyze the conditions on metric perturbations which preserve asymptotic flatness and obtain the limiting form of the symplectic current of general relativity on the space of directions \(\hyp\). In \cref{sec:spi-symm}, using the analysis of the preceding section, we derive the asymptotic symmetry algebra (the \(\spi\) algebra) by considering infinitesimal metric perturbations generated by diffeomorphisms which preserve the asymptotic flatness conditions. In \cref{sec:spi-charges} we derive the charges and fluxes corresponding to these \(\spi\) symmetries. We end with a summary and describe possible future directions in \cref{sec:disc}.

We collect some useful results and asides in the appendices. In \cref{sec:coord} we construct a useful coordinate system near \(i^0\) using the asymptotic flatness conditions on the unphysical metric and relate it to the Beig-Schmidt coordinates in the physical spacetime. \cref{sec:useful-H} collects useful results on the unit-hyperboloid \(\hyp\) on Killing vector fields, symmetric tensor fields and a theorem by Wald showing that (with suitable conditions) closed differential forms are exact. Computations detailing the change in the Lorentz charge under conformal transformations are presented in \cref{sec:conf-lorentz-charge}. In \cref{sec:amb} we show that our charges are unambiguously defined by the the symplectic current of vacuum general relativity. In \cref{sec:new-beta} we generalize the Lorentz charges derived in \cref{sec:lorentz-charge} to include spacetimes where the ``leading order'' magnetic part of the Weyl tensor \(\dd B_{ab}\) is allowed to be non-vanishing.

%%%-----------------------------------------------------------
%\subsection*{Notation and conventions}

\hr

We use an abstract index notation with indices \(a,b,c,\ldots\) for tensor fields. Quantities defined on the physical spacetime will be denoted by a ``hat'', while the ones on the conformally-completed unphysical spacetime are without the ``hat'' e.g. \(\hat g_{ab}\) is the physical metric while \(g_{ab}\) is the unphysical metric on the conformal-completion. We denote the spatial directions at \(i^0\) by \(\vec\eta\). Regular direction-dependent limits of tensor fields, which we will denote to be $C^{>-1}$, will be represented by a boldface symbol e.g. \(\dd C_{abcd}(\vec\eta)\) is the limit of the (rescaled) unphysical Weyl tensor along spatial directions at \(i^0\). The rest of our conventions follow those of Wald \cite{Wald-book}.

%%==============================================================
\section{Asymptotic-flatness at spatial infinity: Ashtekar-Hansen structure}
\label{sec:AH}

We define spacetimes which are asymptotically-flat at null and spatial infinity using an Ashtekar-Hansen structure \cite{AH, Ash-in-Held}. We use the following the notation for causal structures from \cite{Hawking-Ellis}: \(J(i^0)\) is the causal future of a point \(i^0\) in \(M\), \(\bar J(i^0)\) is its closure, \(\dot J(i^0) \) is its boundary and \(\scri \defn \dot J(i^0) - i^0\). We also use the definition and notation for direction-dependent tensors from \cite{Herb-dd}, see also Appendix B of \cite{KP-GR-match}.

\begin{definition}[Ashtekar-Hansen structure \cite{Ash-in-Held}]\label{def:AH}
	A \emph{physical} spacetime \((\hat M, \hat g_{ab})\) has an \emph{Ashtekar-Hansen structure} if there exists another \emph{unphysical} spacetime \((M, g_{ab})\), such that
	\begin{condlist}
		\item \(M\) is \(C^\infty\) everywhere except at a point \(i^0\) where it is \(C^{>1}\),
		\item the metric \(g_{ab}\) is \(C^\infty\) on \(M-i^0\), and \(C^0\) at \(i^0\) and \(C^{>0}\) along spatial directions at \(i^0\),
		\item there is an embedding of \(\hat M\) into \(M\) such that \(\bar J(i^0) = M - \hat M\),
		\item there exists a function \(\Omega\) on \(M\), which is \(C^\infty\) on \(M-i^0\) and \(C^2\) at \(i^0\) so that \(g_{ab} = \Omega^2 \hat g_{ab}\) on \(\hat M\) and
		\begin{condlist}
			\item \(\Omega = 0\) on \(\dot J(i^0)\),
			\item \(\nabla_a \Omega \neq 0\) on \(\scri\),
			\item at \(i^0\), \(\nabla_a \Omega = 0\), \(\nabla_a \nabla_b \Omega = 2 g_{ab}\). \label{cond:Omega-at-i0}
		\end{condlist}
		\item There exists a neighbourhood \(N\) of \(\dot J(i^0)\) such that \((N, g_{ab})\) is  strongly causal and time orientable, and in \(N \inter \hat M\) the physical metric \(\hat g_{ab}\) satisfies the vacuum Einstein equation \(\hat R_{ab} = 0\),
		\item The space of integral curves of \(n^a = g^{ab}\nabla_b \Omega\) on \(\dot J(i^0)\) is diffeomorphic to the space of null directions at \(i^0\), \label{cond:int-curves}
		\item The vector field \(\varpi^{-1} n^a\) is complete on \(\scri\) for any smooth function \(\varpi\) on \(M - i^0\) such that \(\varpi > 0\) on \(\hat M \union \scri\) and \(\nabla_a(\varpi^4 n^a) = 0\) on \(\scri\). \label{cond:complete}
	\end{condlist}
\end{definition}

The physical role of the conditions in \cref{def:AH} is to ensure that the point \(i^0\) is spacelike related to all points in the physical spacetime \(\hat M\), and represents \emph{spatial infinity}, and that null infinity \(\scri \defn \dot J(i^0) - i^0\) has the usual structure.
Note that the metric \(g_{ab}\) is only \(C^{>0}\) at \(i^0\) along spatial directions, that is, the metric is continuous but the metric connection is allowed to have limits which depend on the direction of approach to \(i^0\). This low differentiability structure is essential to allow spacetimes with non-vanishing ADM mass \cite{AH, Ash-in-Held}. In the following we will only consider the behaviour of the spacetime approaching \(i^0\) along spatial directions, and we will not need the conditions corresponding to null infinity.

\hr

For spacetimes satisfying \cref{def:AH} we have the following limiting structures at \(i^0\) when approached along spatial directions.

Along spatial directions \(\eta_a \defn \nabla_a \Omh\) is \(C^{>-1}\) at \(i^0\) and 
\be\label{eq:eta-defn}
\dd\eta^a \defn \lim_{\to i^0} \nabla^a \Omh\,,
\ee
determines a \(C^{>-1}\) spatial unit vector field at \(i^0\) representing the spatial directions \(\vec\eta\) at \(i^0\). The space of directions \(\vec\eta\) in \(Ti^0\) is a unit-hyperboloid \(\hyp\).

If \(T^{a \ldots}{}_{b \ldots}\) is a \(C^{>-1}\) tensor field at \(i^0\) in spatial directions then, \(\lim_{\to i^0} T^{a \ldots}{}_{b \ldots} = \dd T^{a \ldots}{}_{b \ldots}(\vec\eta)\) is a smooth tensor field on \(\hyp\). Further, the derivatives of \(\dd T^{a \ldots}{}_{b \ldots}(\vec\eta)\) to all orders with respect to the direction \(\vec\eta\) satisfy\footnote{The factors of \(\Omh\) on the right-hand-side of \cref{eq:dd-der-spatial} convert between \(\nabla_a\) and the derivatives with respect to the directions; see \cite{Ash-in-Held,Geroch-asymp}.}
\be\label{eq:dd-der-spatial}
    \dd \partial_c \cdots \dd \partial_d \dd T^{a \ldots}{}_{b \ldots}(\vec\eta) = \lim_{\to i^0} \Omh \nabla_c \cdots \Omh \nabla_d T^{a \ldots}{}_{b \ldots}\,,
\ee
where \(\dd \partial_a\) is the derivative with respect to the directions \(\vec \eta\) defined by 
\be\label{eq:dd-derivative-spatial}\begin{split}
	\dd v^c \dd \partial_c \dd T^{a \ldots}{}_{b \ldots}(\vec\eta) & \defn \lim_{\epsilon \to 0} \frac{1}{\epsilon} \big[ \dd T^{a \ldots}{}_{b \ldots}(\vec\eta + \epsilon \vec v) - \dd T^{a \ldots}{}_{b \ldots}(\vec\eta) \big] \quad \text{for all } \dd v^a \in T\hyp \,,\\
	\dd \eta^c \dd \partial_c \dd T^{a \ldots}{}_{b \ldots}(\vec\eta) & \defn 0\,.
\end{split}\ee

The metric \(\dd h_{ab}\) induced on \(\hyp\) by the universal metric \(\dd g_{ab}\) at \(i^0\), satisfies
\be\label{eq:d-eta-h}
    \dd h_{ab} \defn \dd g_{ab} - \dd \eta_a \dd \eta_b = \dd \partial_a \dd \eta_b\,.
\ee
Further, if \(\dd T^{a \ldots}{}_{b \ldots}(\vec\eta)\) is orthogonal to \(\dd\eta^a\) in all its indices then it defines a tensor field \(\dd T^{a \ldots}{}_{b \ldots}\) intrinsic to \(\hyp\). In this case, it follows from \cref{eq:d-eta-h} and \(\dd\partial_c \dd g_{ab} = 0\) (since \(\dd g_{ab}\) is direction-independent at \(i^0\)) that projecting \emph{all} the indices in \cref{eq:dd-der-spatial} using \(\dd h_{ab}\) defines a derivative operator \(\dd D_a\) intrinsic to \(\hyp\) which is also the covariant derivative operator associated with \(\dd h_{ab}\). We also define
\be\label{eq:volume-hyp}
\dd\varepsilon_{abc} \defn - \dd\eta^d \dd\varepsilon_{dabc} \eqsp \dd\varepsilon_{ab} \defn \dd u^c \dd\varepsilon_{cab}\,,
\ee
where \(\dd\varepsilon_{abcd}\) is volume element at \(i^0\) corresponding to the metric \(\dd g_{ab}\), \(\dd\varepsilon_{abc}\) is the induced volume element on \(\hyp\), and \(\dd\varepsilon_{ab}\) is the induced area element on some cross-section \(S\) of \(\hyp\) with a future-pointing timelike normal \(\dd u^a\) such that \(\dd h_{ab} \dd u^a \dd u^b = -1\). 

\begin{remark}[Conformal freedom]\label{rem:conf}
    It follows from the conditions in \cref{def:AH} that the allowed conformal freedom \(\Omega \mapsto \omega \Omega\) is such that \(\omega > 0\) is smooth in \(M - i^0\), is \(C^{>0}\) at \(i^0\) and \(\omega\vert_{i^0} = 1\). From these conditions it follows that 
\be \label{eq:conf-tr-defn}
 \omega = 1 + \Omh \alpha \,,
 \ee
where \(\alpha\) is \(C^{>-1}\) at \(i^0\). Let \(\dd\alpha(\vec\eta) \defn \lim_{\to i^0} \alpha\), then from \cref{eq:conf-tr-defn} we also get
\be
    \lim_{\to i^0} \nabla_a \omega = \dd\alpha \dd\eta_a + \dd D_a \dd\alpha\,.
\ee
Note in particular, that the unphysical metric \(\dd g_{ab}\) at \(i^0\) is invariant under conformal transformations. While
\be
    \eta^a \mapsto \omega^{-2}[\omega^\half \eta^a + \tfrac{1}{2} \omega^{-\half} \Omh \nabla^a \omega ] \implies \dd\eta^a \mapsto \dd\eta^a\,.
\ee
Thus, unit spatial directions \(\vec\eta\), the space of directions \(\hyp\), and the induced metric on it \(\dd h_{ab}\) are also invariant.
\end{remark}

%%===========================================================
\section{Maxwell fields: symmetries and charges at \(i^0\)}
\label{sec:Maxwell}

To illustrate our general strategy, we first consider the simpler case of Maxwell fields on any fixed background spacetime satisfying \cref{def:AH}.

In the physical spacetime \(\hat M\), let \(\hat F_{ab}\) be the Maxwell field tensor satisfying the Maxwell equations 
\be\label{eq:Max-phys}
    \hat g^{ac} \hat g^{bd}\hat\nabla_b \hat F_{dc} = 0 \eqsp \hat\nabla_{[a} \hat F_{bc]} = 0\,.
\ee
In the unphysical spacetime \(M\) with \(F_{ab} \defn \hat F_{ab}\) we have 
\be\label{eq:maxwell}
    \nabla_b F^{ba} = 0 \eqsp \nabla_{[a} F_{bc]} = 0\,.
\ee

The Maxwell tensor $F_{ab}$ is smooth everywhere in the unphysical spacetime except at $i^{0}$. Analyzing the behaviour of \(F_{ab}\) in the simple case of a static point charge in Minkowski spacetime, it can be seen that \(F_{ab}\) diverges in the limit to $i^{0}$, but $\Omega F_{ab}$ admits a direction-dependent limit.\footnote{Note that this diverging behaviour of \(F_{ab}\) refers to the tensor in the unphysical spacetime with the chosen \(C^{>1}\) differential structure at \(i^0\). In an asymptotically Cartesian coordinate system of the physical spacetime, this behaviour reproduces the standard \(1/r^2\) falloff for \(F_{ab}\) and \(\dd F_{ab}(\vec\eta)\) is the ``leading order'' piece at \(O(1/r^2)\).}  Hence we assume as our asymptotic condition that
\be \label{eq:dd-F}
 \lim_{\to i^{0}}\Omega	F_{ab} =  \dd{F}_{a b} (\vec\eta) \text{ is } C^{>-1}\,.
\ee
The direction-dependent limit of the Maxwell tensor, $\dd{F}_{ab}$, induces smooth tensor fields on $\hyp$. These are given by the ``electric'' and ``magnetic'' parts of the Maxwell tensor defined by
\be\label{eq:EB-F-defn}
    \dd{E}_{a}(\vec\eta) =\dd{F}_{ab}(\vec\eta) \dd{\eta}^{b} \eqsp \dd{B}_{a}(\vec\eta) = * \dd{F}_{ab}(\vec\eta) \dd{\eta}^{b}\,.
\ee 	
where \(* \dd{F}_{ab}(\vec\eta) \defn \tfrac{1}{2} \dd\varepsilon_{ab}{}^{cd} \dd F_{cd}(\vec\eta) \) is the Hodge dual with respect to the unphysical volume element \(\dd\varepsilon_{abcd}\) at \(i^0\). The electric and magnetic fields are orthogonal to \(\dd\eta^a\) and thus induce intrinsic fields \(\dd E_a\) and \(\dd B_a\) on \(\hyp\). Note that $\dd{F}_{ab}$ can be reconstructed from $\dd{E}_{a}$ and $\dd{B}_{a}$ using
\be
    \dd{F}_{ab} = 2 \dd{E}_{[a} \dd{\eta}_{b]} + \dd{\varepsilon}_{abcd} \dd{\eta}^{c} \dd{B}^{d}\,.
\ee
The asymptotic Maxwell equations are obtained by multiplying \cref{eq:maxwell} by \(\Omega^{\nfrac{3}{2}}\) and taking the limit to \(i^0\) in spatial directions (see \cite{AH} for details)
\be\label{eq:Max-eqn-asymp}\begin{aligned}
\dd D^{a}\dd{E}_{a} &=0 \eqsp \dd D_{[a} \dd{E}_{b]} =0 \,, \\
\dd D^{a} \dd{B}_{a} &=0 \eqsp \dd D_{[a}\dd{B}_{b]} =0\,.
\end{aligned}\ee\\

To use the symplectic formalism for Maxwell theory, we will need to introduce the vector potential as the basic dynamical field. Let \(\hat A_a\) be a vector potential for \(\hat F_{ab}\) so that \(\hat F_{ab} = 2 \hat\nabla_{[a} \hat A_{b]}\) in the physical spacetime. Then, \(A_a \defn \hat A_a\) is a vector potential for \(F_{ab}\) in the unphysical spacetime. We further assume that the vector potential \(A_a\) for \(F_{ab}\) is chosen so that \(\Omh A_a\) is \(C^{>-1}\) at \(i^0\). Then define the asymptotic potentials
\be\label{eq:EM-potentials}
    \dd V(\vec\eta) \defn \dd\eta^a \lim_{\to i^0} \Omh A_a \eqsp \dd A_{a}(\vec\eta) \defn \dd h_a{}^b \lim_{\to i^0} \Omh A_b\,.
\ee
Then the corresponding smooth fields \(\dd V\) and \(\dd A_a\) induced on \(\hyp\) act as potentials for the electric and magnetic field through
\be\label{eq:Max-EB-potentials}
    \dd E_a = \dd D_a \dd V \eqsp \dd B_a = \tfrac{1}{2} \dd\varepsilon_a{}^{bc} \dd D_b \dd A_c\,.
\ee
Even though we do not need this form, for completeness, we note that the Maxwell equations on \(\hyp\) (\cref{eq:Max-eqn-asymp}) can be written in terms of the potentials \(\dd V\) and \(\dd A_a\) as
\be
    \dd D^2 \dd V = 0 \eqsp \dd D^2 \dd A_a = \dd D_a \dd D^b \dd A_b + 2 \dd A_a\,.
\ee\\

Now consider a gauge transformation of the vector potential 
\be\label{eq:EM-gauge}
    A_a \mapsto A_a + \nabla_a \lambda\,,
\ee
where \(\lambda\) is \(C^{>-1}\) at \(i^0\). Then with \(\dd\lambda(\vec\eta) \defn \lim_{\to i^0} \lambda\), the gauge transformations of the asymptotic potentials (\cref{eq:EM-potentials}) on \(\hyp\) is given by
\be\label{eq:Max-symm}
    \dd V \mapsto \dd V \eqsp \dd A_a \mapsto \dd A_a + \dd D_a \dd\lambda\,.
\ee
Thus, the asymptotic symmetries of Maxwell fields at \(i^0\) are given by the functions \(\dd\lambda\) on \(\hyp\).

\begin{remark}[Special choices of gauge]\label{rem:EM-gauge-choice}
The gauge freedom in the Maxwell vector potential can be used to impose further restrictions on the potential \(\dd A_a\) on \(\hyp\). We illustrate the following two gauge conditions which will have analogues in the gravitational case (see \cref{rem:GR-gauge-choice}).
\begin{enumerate}
    \item Consider the Lorenz gauge condition \(\hat g^{ab} \hat\nabla_a \hat A_b = 0\) on the physical vector potential \(\hat A_a\) in the physical spacetime as used in \cite{CE,HT-em}. Multiplying this condition by \(\Omega^{-1}\) and taking the limit to \(i^0\), using \cref{eq:EM-potentials} we get the asymptotic gauge condition
\be\label{eq:Lorenz-gauge-H}
    \dd D^a \dd A_a = 2 \dd V\,.
\ee
Alternatively, from \cref{eq:Max-symm} we see that
\be
    \dd D^a \dd A_a \mapsto \dd D^a \dd A_a + \dd D^2 \dd\lambda\,.
\ee
    By solving a linear hyperbolic equation for \(\dd\lambda\) we can choose a new gauge in which
    \be
        \dd D^a \dd A_a = 0\,.
    \ee
    Both these gauge conditions reduce the allowed asymptotic symmetries to
    \be
        \dd D^2 \dd\lambda = 0\,.
    \ee
    \item If we impose the restriction \(\dd B_a = 0\) then \(\dd D_{[a}\dd A_{b]} = 0\) and thus there exists a function \(\dd A\) so that \(\dd A_a = \dd D_a \dd A\).\footnote{This follows from the fact that every \(1\)-loop in \(\hyp\) is contractible to a point and hence the first de Rahm cohomology group of \(\hyp\) is trivial.} Then using the transformation \cref{eq:Max-symm} we can set \(\dd A_a = 0\). The remaining asymptotic symmetries are just the Coulomb symmetries \(\dd \lambda = \text{constant}\). This is analogous to the condition used by Ashtekar and Hansen in the gravitational case to reduce the asymptotic symmetries to the Poincar\'e algebra \cite{AH}.
\end{enumerate}
    In what follows we will not need to impose any gauge condition on the potential \(\dd A_a\) and our analysis will be completely gauge invariant.
\end{remark}

\begin{remark}[Logarithmic gauge transformations]
\label{rem:log-gauge}
Note that above we only considered gauge transformations \cref{eq:EM-gauge} where the gauge parameter \(\lambda\) was \(C^{>-1}\) at \(i^0\). However, there is an additional ambiguity in the choice of gauge given by the \emph{logarithmic gauge transformations} of the form
\be
    A_a \mapsto A_a + \nabla_a (\ln\Omh \Lambda)\,,
\ee
where \(\Lambda\) is \(C^{>0}\) at \(i^0\). Under this gauge transformation \(\Omh A_a\) is still \(C^{>-1}\) at \(i^0\), and from \cref{eq:EM-potentials} we have the transformations
\be
    \dd V \mapsto \dd V + \dd\Lambda \eqsp \dd A_a \mapsto \dd A_a \,,
\ee
where \(\dd\Lambda \defn \lim_{\to i^0} \Lambda\) which is direction-independent at \(i^0\) and induces a constant function on \(\hyp\). From \cref{eq:Max-EB-potentials} we see that the fields \(\dd E_a\) and \(\dd B_a\) are invariant under this transformation. Since our charges and fluxes, derived below, will be expressed in terms of \(\dd E_a\) we will not need to fix this logarithmic gauge ambiguity in the potentials for electromagnetism. However, there is an analogous logarithmic translation ambiguity in the gravitational case which we will need to fix (see \cref{rem:log-trans}). Thus we now illustrate how this logarithmic gauge ambiguity can be fixed even in electromagnetism.

Since the metric \(\dd g_{ab}\) in the tangent space \(Ti^0\) is universal and isometric to the Minkowski metric it is invariant under the reflection of the spatial directions \(\vec\eta \mapsto - \vec\eta\). This gives rise to a reflection isometry of the metric \(\dd h_{ab}\) on the space of directions \(\hyp\). It was shown in \cite{KP-EM-match} that the Maxwell fields on \(\hyp\) which ``match'' on to asymptotically-flat Maxwell fields on null infinity are the ones where the electric field \(\dd E_a\) is reflection-odd i.e.
\be
    \dd E_a (\vec\eta) = - \dd E_a(-\vec\eta)\,.
\ee
Further, since the logarithmic gauge parameter \(\dd\Lambda\) is \emph{direction-independent} we have that, \(\dd\Lambda\) is reflection-even
\be
    \dd \Lambda(\vec\eta) = \dd\Lambda(-\vec\eta)\,.
\ee
Using a reflection-odd \(\dd E_a\) in \cref{eq:Max-EB-potentials} we see that using a logarithmic gauge transformation we can demand that the potential \(\dd V\) is also reflection-odd, so that 
\be
    \dd V (\vec\eta) = - \dd V(-\vec\eta)\,.
\ee
This fixes the logarithmic gauge ambiguity in the potentials.
\end{remark}

\hr

Let us now analyze the charges and fluxes for this theory. To do this, we start by studying the symplectic current. In vacuum electromagnetism, this is given by:
\begin{equation}\label{eq:Maxwell-symp}
    \omega_{abc}(\delta_{1} A, \delta_{2}A) = \hat\varepsilon_{abcd} \lb( \delta_{1} \hat{F}^{de} \delta_{2} \hat A_{e} - \delta_{2}\hat {F}^{de} \delta_{1} \hat A_{e} \rb)\,, 
\end{equation}
where the indices on \(\delta \hat F_{ab}\) have been raised with the physical metric \(\hat g^{ab}\). In terms of quantities in the unphysical spacetime we have
\be\label{eq:Maxwell-symp-unphys}
    \omega_{abc}(\delta_{1} A, \delta_{2} A) = \varepsilon_{abcd} \lb( \delta_{1} {F}^{de} \delta_{2} A_{e} - \delta_{2} {F}^{de} \delta_{1} A_{e} \rb) \,,
\ee
where we have used $\hat{\varepsilon}_{abcd} = \Omega^{-4} \varepsilon_{abcd}\,,$ and \(\hat g^{ab} = \Omega^2 g^{ab}\). 

To obtain the limit to \(i^0\) we rewrite this in terms of direction-dependent quantities from  \cref{eq:dd-F,eq:EM-potentials}. We see that \(\Omega^{\nfrac{3}{2}}\omega_{abc}\) is \(C^{>-1}\) at \(i^0\). The pullback of this direction-dependent limit to \(\hyp\) is then given by
\be
    \pb{\dd\omega}(\delta_{1} A, \delta_{2} A) = - \dd\varepsilon_3 \lb( \delta_1 \dd E^a \delta_2 \dd A_a - \delta_2 \dd E^a \delta_1 \dd A_a \rb)\,,
\ee
where $\dd{\varepsilon}_{3} = \dd{\varepsilon}_{abc}$ is the volume element on $\hyp$.

We now take $\delta_{2}$ to correspond to a gauge transformation as in \cref{eq:Max-symm} to get
\begin{align}\label{maxwell-symp-current}
    \pb{\dd\omega}(\delta A,\delta_{\dd{\lambda}} A)=  - \dd\varepsilon_{3} \delta \dd E^a \dd D_a \dd{\lambda} 
    = - \dd\varepsilon_{3} \dd D^{a}(\delta \dd{E}_{a} \dd{\lambda})\,.
\end{align}
where in the last step we have used the linearized Maxwell equation $\dd D_{a} \delta \dd{E}^{a}=0$ (see \cref{eq:Max-eqn-asymp}). That is, the symplectic current (with one of the perturbations being generated by a gauge transformation) can be written as a total derivative of \(\delta\dd E_a \dd\lambda\). Thus we define the perturbed charge \(\delta \mc Q[\dd{\lambda}; S]\) on a cross-section \(S\) of \(\hyp\) by
\be
    \delta \mc Q[\dd{\lambda}; S] = \int_S \dd{\varepsilon}_{_2} \dd u^{a} \delta\dd{E}_{a} \dd{\lambda}\,,
\ee
where $\dd{\varepsilon}_{2} \equiv \dd{\varepsilon}_{ab}$ is the area element on $S$ and $\dd u^{a}$ is the future-directed normal to it. Note that this expression is manifestly integrable and defines the unperturbed charge once we choose a reference solution on which \(\mc Q[\dd{\lambda}; S] = 0 \) for all \(\dd\lambda\) and all \(S\). For the reference solution we choose the trivial solution \(F_{ab} = 0\) so that \(\dd E_a = 0\). Then the unperturbed charge is given by
\be \label{eq:Maxwell-charge}
    \mc Q[\dd{\lambda}; S] = \int_S \dd{\varepsilon}_{_2} \dd u^{a} \dd{E}_{a} \dd{\lambda}  \,,
\ee

Let \(\Delta\hyp\) be any region of \(\hyp\) bounded by the cross-sections \(S_2\) and \(S_1\) (with \(S_2\) in the future of \(S_1\)), then the flux of the charge \cref{eq:Maxwell-charge} through $\Delta\hyp$ is given by
\be \label{Maxwell-flux}
    \mc {F} [\dd{\lambda}; \Delta \mc H] = - \int_{\Delta\hyp} \dd{\varepsilon}_{_3} \dd{E}_{a} \dd D^{a} \dd{\lambda}   \,.
\ee
Note that the flux of the charge vanishes for $\dd{\lambda} = \text{constant}$ in which case \cref{eq:Maxwell-charge} is the Coulomb charge. The charges associated with a general smooth $\dd{\lambda}$ are only associated with the blowup \(\hyp\) and not to $i^{0}$ itself. These additional charges are nevertheless useful to relate the charges defined on past and future null infinity and derive the resulting conservation laws for their fluxes in a scattering process; see \cite{KP-EM-match}.

%%============================================================
\section{Gravitational fields and Einstein equations at \(i^0\)}
\label{sec:GR-eom}

Now we turn to a similar analysis of symmetries, charges and fluxes for general relativity. To set the stage in this section we analyze the consequences of Einstein equations and the universal structure common to all spacetimes satisfying \cref{def:AH}.

Using the conformal transformation relating the unphysical Ricci tensor \(R_{ab}\) to the physical Ricci tensor \(\hat R_{ab}\) (see Appendix~D of \cite{Wald-book}), the vacuum Einstein equation \(\hat R_{ab} = 0\) can be written as
\be\label{eq:EE}\begin{aligned}
    S_{ab} & = - 2 \Omega^{-1} \nabla_a \nabla_b \Omega + \Omega^{-2} \nabla^{c}\Omega \nabla_{c}\Omega g_{ab}\,, \\
    \Omega^{\half} S_{ab} & = -4 \nabla_a \eta_b + 4 \Omega^{-\half} \lb( g_{ab} - \tfrac{1}{\eta^{2}} \eta_{a} \eta_{b} \rb)\eta_c \eta^c \,,
\end{aligned}\ee
where, as before, \(\eta_a = \nabla_a \Omh\), and \(S_{ab}\) is given by
\be\label{eq:S-defn}
S_{ab} \defn R_{ab} - \tfrac{1}{6} R g_{ab}\,.
\ee
Further, the Bianchi identity \(\nabla_{[a} R_{bc]de} = 0\) on the unphysical Riemann tensor along with \cref{eq:EE} gives the following equations for the unphysical Weyl tensor \(C_{abcd}\) (see \cite{Geroch-asymp} for details).
\begin{subequations}\label{eq:Bianchi-unphys}\begin{align}
	\nabla_{[e} (\Omega^{-1} C_{ab]cd}) = 0 \label{eq:curl-weyl}\,, \\
	\nabla^d C_{abcd} = - \nabla_{[a} S_{b]c}\,. \label{eq:Weyl-S}
\end{align}\end{subequations}

Since the physical Ricci tensor $\hat{R}_{ab}$ vanishes, the gravitational field is completely described by the physical Weyl tensor $\hat{C}_{abcd}$. The unphysical Weyl tensor is then \(C_{abcd} = \Omega^2 \hat C_{abcd}\). Since the unphysical metric \(g_{ab}\) is \(C^{>0}\) at \(i^0\), \(\Omh C_{abcd}\) is \(C^{>-1}\) at \(i^0\) \cite{AH}, and let
\be
\dd{C}_{abcd}(\vec\eta) \defn \lim_{\to i^{0}} \Omega^{\half} C_{abcd}\,.
\ee
The \emph{electric} and \emph{magnetic} parts of \(\dd C_{abcd}(\vec\eta)\) are, respectively, defined by
\be \label{eq:EB-defn}
    \dd{E}_{ab}(\vec\eta) \defn \dd{C}_{acbd} (\vec\eta) \dd{\eta}^{c} \dd{\eta}^{d} \eqsp \dd{B}_{ab}(\vec\eta) \defn * \dd{C}_{acbd} (\vec\eta) \dd{\eta}^{c}\dd{\eta}^{d}\,.
\ee
where \(*\dd C_{abcd}(\vec\eta) \defn \tfrac{1}{2} \dd \varepsilon_{ab}{}^{ef} \dd C_{efcd} (\vec\eta)\). It follows from the symmetries of the Weyl tensor that both \(\dd E_{ab}(\vec\eta)\) and \(\dd B_{ab}(\vec\eta)\) are orthogonal to \(\dd\eta^a\), symmetric and traceless with the respect to the metric \(\dd h_{ab}\) on $\hyp$, and thus define smooth tensor fields \(\dd E_{ab}\) and \(\dd B_{ab}\) on \(\hyp\), respectively. The limiting Weyl tensor can be obtained from these fields using  
\be\label{eq:E-B-decomp} 
    \dd{C}^{ab}{}_{cd}(\vec\eta) = 4 \dd{\eta}^{[a} \dd{\eta}_{[c} \dd{E}^{b]}{}_{d]} - 4 \dd{h}^{[a}{}_{[c}\dd{E}^{b]}{}_{d]} + 2 \dd{\varepsilon}^{abe} \dd{\eta}_{[c}\dd{B}_{d]e} + 2 \dd{\varepsilon}_{cde}\dd{\eta}^{[a}\dd{B}^{b]e} \,.
\ee
Further, as shown in \cite{AH}, multiplying \cref{eq:curl-weyl} by \(\Omega\) and taking the limit to \(i^0\) gives the equations of motion
\be \label{eq:EB-curl}
\dd D_{[a} \dd E_{b]c} =0 \eqsp \dd D_{[a} \dd B_{b]c}=0\,.
\ee
These are the asymptotic Einstein equations at spatial infinity. Taking the trace over the indices \(a\) and \(c\) and using the fact that $\dd{E}_{ab}$ and $\dd{B}_{ab}$ are traceless, it also follows that
\be\label{eq:EB-div}
    \dd D^b \dd{E}_{ab} = \dd D^b \dd{B}_{ab}= 0 \,.
\ee

To apply the symplectic formalism to general relativity, we will need to consider metric perturbations instead of just perturbations of the Weyl tensor. As we will show below (\cref{eq:gamma-E-K}) suitably rescaled limits of the unphysical metric perturbations can be expressed in terms of perturbations of certain potentials for \(\dd E_{ab}\) and \(\dd B_{ab}\) provided by the tensor \(S_{ab}\) in \cref{eq:S-defn}. These potentials are obtained as follows: Since \(g_{ab}\) is \(C^{>0}\), \(\Omh S_{ab}\) is \(C^{>-1}\) and let \(\dd S_{ab} (\vec\eta) \defn \lim_{\to i^0} \Omh S_{ab}\). Define
\be \label{eq:potentials-defn}
    \dd{E}(\vec\eta) \defn \dd{S}_{ab}(\vec\eta) \dd{\eta}^{a}\dd{\eta}^{b} \eqsp \dd{K}_{ab}(\vec\eta) \defn \dd{h}_a{}^{c} \dd{h}_b{}^{d} \dd{S}_{cd}(\vec\eta) - \dd{h}_{ab} \dd{E}(\vec\eta)\,, 
\ee
which induce the fields \(\dd E\) and \(\dd K_{ab}\) intrinsic to \(\hyp\). Following \cite{AH}, multiplying \cref{eq:Weyl-S} by \(\Omega\) and taking the limit to \(i^0\), along with \cref{eq:EB-curl} implies that
\be\label{eq:h-eta-S}
    \dd h_a{}^b \dd \eta^c \dd S_{bc}(\vec\eta) = \dd D_a \dd E\,,
\ee
and
\be \label{eq:EB-potentials}
    \dd{E}_{ab} = -\tfrac{1}{4} (\dd D_{a}\dd D_{b}\dd{E} +  \dd h_{ab} \dd{E}) \eqsp \dd{B}_{ab} = -\tfrac{1}{4}\dd{\varepsilon}_{cda}\dd D^{c}\dd{K}^{d}{}_{b}\,.
\ee
Thus, \(\dd E\) is a scalar potential for \(\dd E_{ab}\) while \(\dd K_{ab}\) is a tensor potential for \(\dd B_{ab}\).\footnote{Since \(\dd B_{ab}\) is curl-free (\cref{eq:EB-curl}), there also exists a scalar potential for \(\dd B_{ab}\) (see \cref{lem:scalar-pot}). However this scalar potential cannot be obtained as the limit of some tensor field on spacetime.}

The potentials \(\dd E\) and \(\dd K_{ab}\) are not free fields on \(\hyp\). Suitably commuting the derivatives and using \cref{eq:Riem-hyp} one can verify that \(\dd E_{ab}\) identically satisfies \cref{eq:EB-curl} when written in terms of the potential \(\dd E\) while \(\dd h^{ab} \dd E_{ab} = 0\) gives
\be \label{eq:box-E}
    \dd D^{2}\dd E + 3 \dd{E} = 0\,.
\ee
On the other hand, since \(\dd K_{ab}\) is symmetric the magnetic field \(\dd B_{ab}\) in \cref{eq:EB-potentials} is identically traceless. Since \(\dd B_{ab}\) is symmetric and satisfies \cref{eq:EB-curl}, we get that
\begin{subequations}\label{eq:K-eom}\begin{align}
    \dd{\varepsilon}_a{}^{bc} \dd{B}_{bc} = 0 \implies \dd D^{b}\dd{K}_{ab} & = \dd D_{a} \dd{K} \label{eq:div-K}\,, \\
    \dd\varepsilon_a{}^{cd} \dd D_c \dd{B}_{db} = 0 \implies \dd D^{2}\dd{K}_{ab} & = \dd D_{a}\dd D_{b} \dd{K} + 3 \dd{K}_{ab} - \dd{h}_{ab} \dd{K} \label{eq:box-K}\,,
\end{align}\end{subequations}
where \(\dd K \defn \dd h^{ab} \dd K_{ab}\), and to get \cref{eq:box-K} we have commuted derivatives using \cref{eq:Riem-hyp} and used \cref{eq:div-K}. Considering the potentials \(\dd E\) and \(\dd K_{ab}\) as the basic fields, the asymptotic Einstein equations are given by \cref{eq:box-E,eq:K-eom}, while the Weyl tensors \(\dd E_{ab}\) and \(\dd B_{ab}\) are derived quantities through \cref{eq:EB-potentials}.\\

To define the charge for asymptotic Lorentz symmetries, e.g. angular momentum in \cref{sec:lorentz-charge}, we will need the ``subleading'' part of the magnetic Weyl tensor. Following Ashtekar and Hansen \cite{AH}, we will restrict to the class of spacetimes satisfying the additional condition \(\dd B_{ab} = 0\). We also require that the ``subleading'' magnetic field defined by
\be \label{eq:beta-defn}
    \dd{\beta}_{ab} \defn \lim_{\to i^{0}} * C_{acbd} \eta^{c} \eta^{d}\,,
\ee
exists as a \(C^{>-1}\) tensor field at \(i^0\). The condition \(\dd B_{ab} = 0\) is satisfied in any spacetime which is \emph{either} stationary \emph{or} axisymmetric \cite{B-zero}. In \cref{sec:new-beta} we show how one can define a ``subleading'' magnetic Weyl tensor and the Lorentz charges even when \(\dd B_{ab} \neq 0\). Since those computations are more tedious we impose the above restriction in the main body of the paper.

The consequences of this restriction are as follows. Since \(\dd B_{ab} = 0\) from \cref{eq:EB-potentials} the ``curl'' of $\dd{K}_{ab}$ vanishes
\be \label{eq:vanishing-curl-K}
    \dd D_{[a} \dd{K}_{b]c} = 0\,.
\ee
It follows from \cref{lem:scalar-pot} that there exists a scalar potential \(\dd k\) such that
\be\label{eq:K-potential}
    \dd{K}_{ab} = \dd D_{a}\dd D_{b} \dd{k} + \dd{h}_{ab}\dd{k}\,.
\ee
The scalar potential \(\dd k\) is a free function on \(\hyp\) since the equations of motion \cref{eq:K-eom} are identically satisfied after using \cref{eq:K-potential}. Using the freedom in the conformal factor one can now set \(\dd K_{ab} = 0\) (see \cite{AH} and \cref{rem:GR-gauge-choice}). Since, we do not wish to impose any restrictions on the conformal factor, we will \emph{not} demand that \(\dd K_{ab}\) vanishes.

Note that from \cref{eq:beta-defn} it follows that \(\dd\beta_{ab}\) is symmetric, tangent to \(\hyp\) and traceless. In the following we shall also need an equation of motion for \(\dd\beta_{ab}\) which is obtained as follows: Contract the indices \(e\) and \(d\) in \cref{eq:curl-weyl} and multiply by 
\(3 \Omega\) to get
\be
    \nabla^d C_{abcd} = \Omega^{-1} C_{abcd}\nabla^d \Omega = 2 \Omega^{-\half} C_{abcd}\eta^d\,.
\ee
Using the Hodge dual of the above equation we obtain
\be
    \Omh \nabla^b (*C_{acbd} \eta^c \eta^d) = - 2 *C_{acbd}\eta^b \eta^c\eta^d + 2 \Omh *C_{acbd} \nabla^b \eta^{(c} \eta^{d)}\,.
\ee
The first term on the right-hand-side vanishes due to the symmetries of the Weyl tensor. In the second term on the right-hand-side we substitute for the derivative of \(\eta_a\) using \cref{eq:EE} to get
\be
    \Omh \nabla^b (*C_{acbd} \eta^c \eta^d) = - \tfrac{1}{4} (\Omh *C_{acbd}) (\Omh S^{bc}) \eta^d\,.
\ee
Taking the limit to \(i^0\), writing the tensor \(\dd S_{ab}\) in terms of the gravitational potentials through \cref{eq:potentials-defn,eq:h-eta-S}, and using \(\dd B_{ab} = 0\) along with \cref{eq:E-B-decomp}, we get the equation of motion
\be\label{eq:div-beta}
    \dd D^b \dd\beta_{ab} =  \tfrac{1}{4} \dd\varepsilon_{cda} \dd E^c{}_b \dd K^{bd}\,.
\ee

\begin{remark}[Conformal transformations of the asymptotic fields]\label{rem:conf-GR-fields}
    Under changes of the conformal factor \(\Omega \mapsto \omega\Omega\) we have
\be\begin{aligned}
	S_{ab} & \mapsto S_{ab} - 2 \omega^{-1} \nabla_a \nabla_b \omega + 4 \omega^{-2} \nabla_a \omega \nabla_b \omega - \omega^{-2} g_{ab} \nabla^c \omega \nabla_c \omega\,, \\
	C_{abcd} & \mapsto \omega^2 C_{abcd}\,. 
\end{aligned}\ee
From the conditions in \cref{rem:conf} it follows that \(\dd E_{ab}\), \(\dd B_{ab}\) and \(\dd E\) are invariant while
\be\label{eq:conf-K}
    \dd K_{ab} \mapsto \dd K_{ab} - 2 (\dd D_a \dd D_b \dd\alpha + \dd h_{ab}\dd\alpha)\,.
\ee
Further, when \(\dd B_{ab} = 0\) we also have the transformation of the ``subleading'' magnetic Weyl tensor \(\dd\beta_{ab}\) given by
\be\label{eq:conf-beta}
    \dd\beta_{ab} \mapsto \dd\beta_{ab} - \dd\varepsilon_{cd(a} \dd E^c{}_{b)} \dd D^d \dd\alpha\,.
\ee 

\end{remark}
 
%%----------------------------------------------------------
\subsection{The universal structure at \(i^0\)}
\label{sec:univ-str}

In this section we summarize the \emph{universal structure} at \(i^0\), that is, the structure common to all spacetimes which are asymptotically-flat in the sense of \cref{def:AH} and thus is independent of the choice of the physical spacetime under consideration.

Consider any two unphysical spacetimes \((M, g_{ab}, \Omega)\) and \((M', g'_{ab},\Omega')\) with their respective \(C^{>1}\) differential structures at their spatial infinities corresponding to two different physical spacetimes. Using a \(C^1\) diffeomorphism we can identify the points representing the spatial infinities and their tangent spaces without any loss of generality. Each of the metrics \(g_{ab}\) and \(g'_{ab}\) induces a metric in the tangent space \(Ti^0\) which is isometric to the Minkowski metric. Thus, the metric \(\dd g_{ab}\) at \(i^0\) is also universal. This also implies that the spatial directions \(\vec\eta\), the space of directions \(\hyp\) and the induced metric \(\dd h_{ab}\) are universal.

So far we have only used the \(C^1\) differential structure. However since the differential structure at \(i^0\) is slightly better, being \(C^{>1}\), we can identify the spacetimes at the ``next order''. In \cite{AH} this structure was imposed by suitably identifying spacelike geodesics in the \emph{physical} spacetimes. But as pointed out by \cite{Porrill} this identification cannot be performed except in very special cases. Below we argue that a similar identification of the spacetimes can be done using equivalence classes of \(C^{>1}\) curves in the unphysical spacetimes. The proof is based on constructing a suitable \(C^{>1}\) coordinate system at \(i^0\) and is deferred to \cref{sec:coord}, we summarize the main construction below.

Consider the unphysical spacetime \((M, g_{ab}, \Omega)\), and a spacelike \(C^{>1}\) curve \(\Gamma_v\) in \(M\) passing through \(i^0\) with tangent \(v^a\). Since the curve is \(C^{>1}\) its tangent vector \(v^a\) is \(C^{>0}\). Using the universal metric \(\dd g_{ab}\) at \(i^0\) we can then demand that \(v^a\) be unit-normalized at \(i^0\) and thus along the curve \(\Gamma_v\)
\be
    \lim_{\to i^0} v^a = \dd\eta^a\,,
\ee
that is the curve \(\Gamma_v\) points in some spatial direction \(\vec\eta\) at \(i^0\). Further, since \(\Gamma_v\) is \(C^{>1}\),  \(v^b \nabla_b v^a\) is a \(C^{>-1}\) vector. Thus, define the \emph{acceleration} of \(\Gamma_v\) at \(i^0\) by the projection of this vector on to \(\hyp\)
\be\label{eq:acc-defn}
    \dd A^a[\Gamma_v] \defn \dd h^a{}_b \lim_{\to i^0} v^c \nabla_c v^b\,.
\ee

Now we define the curves \(\Gamma_v\) (with tangent \(v^a\)) and \(\Gamma_\eta\) (with tangent \(\eta^a\)) to be equivalent if their accelerations are equal at \(i^0\). To see what this entails, note that since \(v^a\) is \(C^{>0}\) and equals \(\dd\eta^a\) in the limit to \(i^0\) we have that \(v^a = \eta^a + \Omh w^a\) for some \(w^a\) which is \(C^{>-1}\) at \(i^0\). Then, from \cref{eq:acc-defn} we have
\be
    \dd A^a[\Gamma_v] = \dd A^a[\Gamma_\eta] \iff \dd h_{ab} \lim_{\to i^0} w^b = 0\,.
\ee
Thus, we have an equivalence class of curves through \(i^0\) pointing in each direction \(\vec\eta\) defined by\footnote{These equivalence classes of curves form a principal bundle over \(\hyp\), called \emph{Spi} in \cite{AH}.}
\be\label{eq:equiv-curves}
    \Gamma_v \sim \Gamma_\eta \iff \dd h_{ab}\lim_{\to i^0} \Omega^{-\half} (v^b - \eta^b) = 0\,.
\ee
We will show in \cref{sec:coord} that using a \(C^{>1}\) diffeomorphism one can identify these equivalence classes of curves between any any two spacetimes \((M, g_{ab}, \Omega)\) and \((M', g'_{ab}, \Omega')\). Further, we show that the conformal factors \(\Omega\) and \(\Omega'\) can also be identified in a neighbourhood of \(i^0\).

Thus, the universal structure at \(i^0\) consists of the point \(i^0\), the tangent space \(Ti^0\), the metric \(\dd g_{ab}\) at \(i^0\) and the equivalence classes of \(C^{>1}\) curves given by \cref{eq:equiv-curves}. In addition, the conformal factor \(\Omega\) can also be chosen to be universal.

\begin{remark}[Logarithmic translations]
\label{rem:log-trans}
So far we have worked with a fixed \(C^{>1}\) differential structure in the unphysical spacetime at \(i^0\). But given a physical spacetime the unphysical spacetime is ambiguous up to a \(4\)-parameter family of \emph{logarithmic translations} at \(i^0\) which simultaneously change the \(C^{>1}\) differential structure and the conformal factor at \(i^0\); see \cite{Ash-log} or Remark~B.1 of \cite{KP-GR-match} for details. The logarithmic translations at \(i^0\) are parameterized by a \emph{direction-independent} vector \(\dd\Lambda^a\) at \(i^0\). Any such vector can be written as 
\be
    \dd \Lambda^a = \dd\Lambda \dd\eta^a + \dd D^a \dd\Lambda\,,
\ee
where \(\dd\Lambda(\vec\eta) = \dd\eta_a \dd\Lambda^a\) is a function on \(\hyp\) satisfying
\be\label{eq:log-trans-eqn}
    \dd D_a \dd D_b \dd\Lambda + \dd h_{ab} \dd\Lambda = 0\,.
\ee
Under such logarithmic translations the potentials \cref{eq:potentials-defn} transform as \cite{Ash-log}
\be\label{eq:log-trans}
    \dd E \mapsto \dd E + 4\dd\Lambda \eqsp \dd K_{ab} \mapsto \dd K_{ab}\,,
\ee
while \(\dd E_{ab}\) and \(\dd B_{ab}\) are invariant. The presence of these logarithmic translations will lead to the following issue when we define the charges for supertranslations in \cref{sec:st}. For general supertranslations (which are not translations) our charges will depend on the potential \(\dd E\) instead of just the electric field \(\dd E_{ab}\). Thus, even if we take the physical spacetime to be the Minkowski spacetime our charges will not vanish due to the logarithmic translation ambiguity \cref{eq:log-trans} in \(\dd E\). Thus, now we will fix these logarithmic translations following the argument in \cite{Ash-log}.

Since the metric \(\dd g_{ab}\) in the tangent space \(Ti^0\) is universal and isometric to the Minkowski metric it is invariant under the reflection of the spatial directions \(\vec\eta \mapsto - \vec\eta\). This gives rise to a reflection isometry of the metric \(\dd h_{ab}\) on the space of directions \(\hyp\). Now it was shown in \cite{KP-GR-match} that the only spacetimes which are asymptotically-flat at spatial infinity and which ``match'' on to asymptotically-flat spacetimes on null infinity are the ones where \(\dd E_{ab}\) is reflection-even, i.e.
\be
    \dd E_{ab} (\vec\eta) = \dd E_{ab}(-\vec\eta)\,.
\ee
Further, since \(\dd\Lambda = \dd\eta_a \dd\Lambda^a\) for the \emph{direction-independent} vector \(\dd\Lambda^a\) we have that, \(\dd\Lambda\) is reflection-odd
\be
    \dd \Lambda(\vec\eta) = - \dd\Lambda(-\vec\eta)\,.
\ee
For a reflection-even \(\dd E_{ab}\), from \cref{eq:EB-potentials,eq:log-trans-eqn}, it follows that using a logarithmic translation we can demand that the potential \(\dd E\) is also reflection-even, so that 
\be\label{eq:parity-E}
    \dd E (\vec\eta) = \dd E(-\vec\eta)\,.
\ee
Having fixed the logarithmic translations in this way, \(\dd E_{ab} = 0\) then implies that \(\dd E = 0\). In particular, for Minkowski spacetime we have
\be\label{eq:Mink-stuff}
    \dd E = 0 \eqsp \dd B_{ab} = 0 \eqsp \dd \beta_{ab} = 0 \quad \text{(on Minkowski spacetime)}\,.
\ee
Note that when \(\dd E_{ab} = 0\), \(\dd \beta_{ab}\) is conformally-invariant (see \cref{eq:conf-beta}) and the conditions \cref{eq:Mink-stuff} do not depend on the conformal factor chosen for Minkowski spacetime. These conditions will ensure that our all our charges will vanish on Minkowski spacetime. Thus, from here on we will assume that the logarithmic translations have been fixed as above that is, we work the choice of \(C^{>1}\) differential structure at \(i^0\) where the parity condition \cref{eq:parity-E} is satisfied.
\end{remark}

%%=============================================================
\section{Metric perturbations and symplectic current at \(i^0\)}
\label{sec:pert}
Now consider a one-parameter family of asymptotically-flat physical metrics \(\hat g_{ab}(\lambda)\) where \(\hat g_{ab} = \hat g_{ab}(\lambda = 0)\) is some chosen background spacetime. Define the physical metric perturbation \(\hat \gamma_{ab}\) around the background \(\hat g_{ab}\) by
\be\label{eq:phys-pert}
    \hat \gamma_{ab} = \delta \hat g_{ab} \defn \lb. \frac{d}{d\lambda} \hat g_{ab}(\lambda) \rb\vert_{\lambda = 0}\,.
\ee
We will use ``\(\delta\)'' to denote perturbations of other quantities defined in a similar way.

As discussed above, the conformal factor \(\Omega\) can be chosen universally, i.e., independently of the choice of the physical metric. The unphysical metric perturbation is
\be\label{ond}
	\delta g_{ab} = \gamma_{ab} = \Omega^2 \hat \gamma_{ab}\,,
\ee
and we also have
\be \label{eq:variations-1}
\delta \eta_{a} = \delta \nabla_a \Omh = 0 \eqsp \delta \eta^{a} = \delta(g^{ab}\eta_b) = -\gamma^{ab} \eta_{b}\,.
\ee

Now we investigate the conditions on the unphysical perturbation \(\gamma_{ab}\) which preserve asymptotic flatness and the universal structure at \(i^0\) described in \cref{sec:univ-str}. First recall that since the unphysical metric \(g_{ab}\) is $C^{>0}$ and universal at $i^{0}$, it follows that the unphysical metric perturbation \(\gamma_{ab}\) is \(C^{>0}\) and \(\gamma_{ab}\vert_{i^0} = 0\). Therefore 
\be \label{eq:lim-gamma}
    \dd\gamma_{ab}(\vec\eta) \defn \lim_{\to i^0} \Omega^{-\half} \gamma_{ab} \text{ is } C^{>-1}\,,
\ee
With \cref{eq:variations-1,eq:lim-gamma} we also see that \(\delta \dd\eta^a = 0\). Thus, the metric perturbation also preserves the spatial directions \(\vec\eta\) at \(i^0\), the space of directions \(\hyp\) and the metric \(\dd h_{ab}\) on it.

Now consider the universal structure given by the equivalence classes of \(C^{>1}\) curves through \(i^0\) as described in \cref{sec:univ-str}. Consider the equivalence class of a fixed curve \(\Gamma_v\) with tangent \(v^a\). For this equivalence class to be preserved, the perturbation of \cref{eq:equiv-curves} must vanish. Evaluating this condition using \cref{eq:variations-1,eq:lim-gamma} we obtain the condition
\be\label{eq:gamma-eta-h}
    \dd h_a{}^b \dd \eta^c \dd \gamma_{bc}(\vec\eta) = 0\,.
\ee
In summary, \cref{eq:lim-gamma,eq:gamma-eta-h} are the asymptotic conditions on the unphysical metric perturbations which preserve the asymptotic flatness and the universal structure at \(i^0\).

The metric perturbation \(\dd\gamma_{ab}\) can be directly related to the perturbations of the gravitational potentials \(\dd E\) and \(\dd K_{ab}\) defined in \cref{eq:potentials-defn}. Perturbing \cref{eq:EE} to evaluate $\Omega^{\half} \delta S_{ab}$ and taking the limit to \(i^0\) using \cref{eq:variations-1,eq:lim-gamma} we get
\be
    \delta\dd{S}_{ab} = \lim_{\to i^{0}} \Omega^{\half}\delta S_{ab} = 4 \dd{\partial}_{(a} \dd{\gamma}_{b)c} \dd{\eta}^{c} + 4\dd{\eta}_{(a} \dd{\gamma}_{b)c}\dd{\eta}^{c} + 2\dd{\gamma}_{ab} - 4 \dd{\gamma}_{cd} \dd{\eta}^{c}\dd{\eta}^{d} \dd g_{ab}\,.
\ee
Using the definition of the gravitational potentials \cref{eq:potentials-defn} and \cref{eq:gamma-eta-h} we obtain
\begin{subequations}\label{eq:deltaEK}\begin{align} 
    \delta \dd{E} & = 2 \dd{\gamma}_{ab}\dd{\eta}^{a} \dd{\eta}^{b} \label{eq:deltaE}\,, \\
    \delta \dd{K}_{ab} & = -2 \dd{h}_a{}^{c} \dd{h}_b{}^{d} \dd{\gamma}_{cd} - \dd{h}_{ab}\delta \dd{E} \label{eq:deltaK}\,.
\end{align}\end{subequations}
Using \cref{eq:gamma-eta-h,eq:deltaEK} we can reconstruct the metric perturbation \(\dd\gamma_{ab}(\vec\eta)\) in terms of the perturbed gravitational potentials on \(\hyp\) as
\be \label{eq:gamma-E-K}
    \dd\gamma_{ab}(\vec\eta) = \tfrac{1}{2} \lb[ \delta \dd{E} (\dd{\eta}_{a}\dd{\eta}_{b} - \dd h_{ab}) - \delta \dd{K}_{ab}  \rb]\,.
\ee
The linearized Einstein equations for \(\dd \gamma_{ab}\) in the form \cref{eq:gamma-E-K} are then equivalent to the linearizations of \cref{eq:box-E,eq:K-eom}.\\

Next we consider the behaviour of the symplectic current of vacuum general relativity near \(i^0\). The symplectic current is given by (see \cite{WZ})
\be\label{eq:sympcurrent}
	\omega_{abc} = - \tfrac{1}{16 \pi}\hat\varepsilon_{abcd} \hat w^d \quad\text{with}\quad \hat w^a = \hat P^{abcdef} \hat\gamma_{2bc} \hat\nabla_d \hat\gamma_{1ef} - [1 \lra 2]\,,
\ee
where ``\([1 \lra 2]\)'' denotes the preceding expression with the \(1\) and \(2\), labeling the perturbations, interchanged and the tensor \(\hat P^{abcdef}\) is given by
\be\label{eq:P-defn}
    \hat P^{abcdef} = \hat{g}^{ae}\hat{g}^{fb}\hat{g}^{cd} - \tfrac{1}{2}\hat{g}^{ad}\hat{g}^{be}\hat{g}^{fc} - \tfrac{1}{2}\hat{g}^{ab}\hat{g}^{cd}\hat{g}^{ef} - \tfrac{1}{2}\hat{g}^{bc}\hat{g}^{ae}\hat{g}^{fd} + \tfrac{1}{2}\hat{g}^{bc}\hat{g}^{ad}\hat{g}^{ef}\,.
\ee

To analyse the behaviour of the symplectic current in the limit to \(i^0\) we first express it in terms of quantities in the unphysical spacetime using
\be
    \varepsilon_{abcd} = \Omega^{4} \hat{\varepsilon}_{abcd} \eqsp P^{abcdef} = \Omega^{-6} \hat{P}^{abcdef} \eqsp  \gamma_{ab} = \Omega^{2} \hat{\gamma}_{ab}\,,
\ee
where \(P^{abcdef}\) is defined through the unphysical metric by the same expression as \cref{eq:P-defn}. Using these, and converting the physical derivative operator \(\hat\nabla\) to the unphysical one \(\nabla\) as
\begin{equation}
\hat{\nabla}_{d} \hat{\gamma}_{1 ef} = \nabla_{d} \hat{\gamma}_{1 ef} + \Omega^{-1} [\hat{\nabla}_{d} \Omega \hat{\gamma}_{1 ef} + \hat{\nabla}_{e} \Omega \hat{\gamma}_{1 df} - g_{ed} \hat{\nabla}^{a} \Omega \hat{\gamma}_{1 af}  + (e \leftrightarrow f)]\,,
\end{equation}
we obtain 
\be\begin{aligned}\label{eq:sympcurrent-unphys}
	\omega_{abc} & = - \tfrac{1}{16 \pi}\varepsilon_{abcd} w^d \,, \quad \\[1.5ex]
	\text{with}\quad 
    w^a & = \Omega^{-2} P^{abcdef} \gamma_2{}_{bc} \nabla_d \gamma_1{}_{ef} + \Omega^{-3} \gamma_1^{ab} \nabla_b \Omega \gamma_{2 c}{}^c - [1 \lra 2] \,.
\end{aligned}\ee

Converting to quantities which are direction-dependent at \(i^0\) and using \cref{eq:lim-gamma} we see that \(\Omega^{\nfrac{3}{2}} \omega_{abc}\) is \(C^{>-1}\). The pullback \(\pb{\dd\omega}\) to \(\hyp\) of \(\lim_{\to i^0}\Omega^{\nfrac{3}{2}} \omega_{abc}\) is given by
\be\label{eq:sympcurrent-H}
    \pb{\dd\omega} = - \frac{1}{16\pi} \dd\varepsilon_{3}~ \dd\eta^a \lb( 2 \dd{\eta}^{b} \dd\gamma_{2 ab} \dd\gamma_{1} - \tfrac{1}{2} \dd\gamma_{1 ab} \dd{\partial}^{b} \dd\gamma_{2} + \dd\gamma_{1}^{bc} \dd{\partial}_{c} \dd\gamma_{2 ab} - \tfrac{1}{2} \dd\gamma_{1} \dd{\partial}^{b} \dd\gamma_{2 ab} \rb) - [1 \lra 2]\,.
\ee
This expression can be considerably simplified by rewriting it in terms of the perturbed gravitational potentials \(\delta \dd E\) and \(\delta \dd K_{ab}\) using \cref{eq:gamma-E-K}. An easy but long computation gives
\be\label{eq:sympcurrent-H-simplified}
    \pb{\dd\omega} =  \frac{1}{64 \pi}\dd\varepsilon_3~ (\delta_{1} \dd{K} \delta_{2} \dd{E} - \delta_{2} \dd{K} \delta_{1} \dd{E})\,,
\ee
where, as before, $\dd{K}\defn \dd{h}^{ab} \dd{K}_{ab}$.

%%==========================================================
\section{Asymptotic symmetries at \(i^0\): The \(\spi\) algebra}
\label{sec:spi-symm}

In this section we analyze the asymptotic symmetries at \(i^0\). We show that the diffeomorphisms of the physical spacetime which preserve the asymptotic flatness of the spacetime (defined by \cref{def:AH}) generate an infinite-dimensional algebra \(\spi\). This asymptotic symmetry algebra was obtained in \cite{AH,Ash-in-Held} by analyzing the infinitesimal diffeomorphisms which preserve the universal structure at \(i^0\). Here we provide an alternative derivation by considering the physical perturbations generated by such infinitesimal diffeomorphisms and demanding that the corresponding unphysical perturbations satisfy the asymptotic conditions \cref{eq:lim-gamma,eq:gamma-eta-h}.

Consider an infinitesimal diffeomorphism generated by a vector field \(\hat \xi^a\) in the physical spacetime, and let \(\xi^a = \hat\xi^a\)  be the corresponding vector field in the unphysical spacetime. For \(\xi^a\) to be a representative of an asymptotic symmetry at \(i^0\) the infinitesimal diffeomorphism generated by \(\xi^a\) must preserve the universal structure at $i^{0}$. Firstly, the infinitesimal diffeomorphism must keep the the point $i^{0}$ fixed and preserve the $C^{>1}$ differential structure at $i^{0}$. Thus, \(\xi^a\) must be $C^{>0}$ at \(i^0\) and \(\xi^a \vert_{i^0} = 0\). This implies that \(\Omega^{-\half} \xi^a\) is \(C^{>-1}\) at \(i^0\) and let
\be \label{eq:X-defn}
    \dd X^a(\vec\eta) \defn \lim_{\to i^{0}}\Omega^{-\half} \xi^{a}\,.
\ee

Now consider the physical metric perturbation $ \hat\gamma^{(\xi)}_{ab} = \delta_\xi \hat{g}_{ab} \defn \lie_{\xi} \hat{g}_{ab}$ corresponding to an infinitesimal diffeomorphism generated by $\xi^{a}$. The corresponding unphysical metric perturbation is given by
\begin{align} \label{eq:lin-diffeo}
    \gamma^{(\xi)}_{ab} = \Omega^{2} \lie_{\xi} \hat g_{ab} = \lie_{\xi} g_{ab} - 4\Omega^{-\half} \xi^{c} \eta_{c} g_{ab}\,.
\end{align}

Since \(\gamma^{(\xi)}_{ab}\) must satisfy the asymptotic conditions at \(i^0\) in \cref{eq:lim-gamma,eq:gamma-eta-h}, we have that \(\gamma^{(\xi)}_{ab}\) is \(C^{>0}\) at \(i^0\) and \(\gamma^{(\xi)}_{ab} \vert_{i^0} = 0\). To see the implications of these conditions first evaluate the condition \(\gamma^{(\xi)}_{ab} \vert_{i^0} = 0\) using \cref{eq:lin-diffeo,eq:X-defn} which gives
\be\label{eq:lorentz-cond}
    \dd{\eta}_{a}\dd{X}^{a}(\vec\eta) = 0 \eqsp \dd D_{(a} \dd X_{b)} = 0\,,
\ee
that is, the vector field \(\dd X^a\) is tangent to \(\hyp\) and is a Killing vector field on it. Thus, \(\dd X^a\) is an element of the Lorentz algebra \(\mf{so}(1,3)\). Some useful properties of these Killing vectors and their relationship to infinitesimal Lorentz transformations in the tangent space \(Ti^0\) are collected in \cref{sec:Killing-H}.

Further, since both \(\gamma^{(\xi)}_{ab}\) and \(\lie_\xi g_{ab}\) are \(C^{>0}\) we must have that \(\Omega^{-\half} \xi^a \eta_a\) is also \(C^{>0}\). Since \(\Omega^{-\half} \xi^a \eta_a \vert_{i^0} = 0\) (which follows from \cref{eq:X-defn,eq:lorentz-cond}) we have that \(\Omega^{-1}\xi^a \eta_a\) is \(C^{>-1}\) at \(i^0\) so define
\be\label{eq:f-defn}
    \dd f(\vec\eta) \defn \lim_{\to i^0} \Omega^{-1}\xi^a \eta_a\,.
\ee
The function \(\dd f\) on \(\hyp\) then parametrizes the \emph{supertranslations}. The vector field generating a supertranslation can be obtained as follows. Consider \(\xi^a\) such that the corresponding \(\dd X^a\) (\cref{eq:X-defn}) vanishes and \(\dd\chi^a \defn \lim_{\to i^0} \Omega^{-1}\xi^a\) is \(C^{>-1}\) so that \(\dd f = \dd \chi^a \dd \eta_a\). Now consider the metric perturbation \cref{eq:lin-diffeo} corresponding to such a vector field. From \cref{eq:gamma-eta-h} we must have
\be
    \dd h_a{}^b \dd \eta^c \dd\gamma^{(\xi)}_{bc} = 0\,,
\ee 
where, as before, \(\dd\gamma^{(\xi)}_{ab} = \lim_{\to i^0} \Omega^{-\half} \gamma^{(\xi)}_{ab}\). Evaluating this condition using \cref{eq:lin-diffeo} and \(\dd\chi^a = \lim_{\to i^0} \Omega^{-1}\xi^a\) we get
\be
    \dd h_{ab} \dd\chi^b = - \dd D_a \dd f\,.
\ee
Thus a pure supertranslation \(\dd f\) is represented by a vector field \(\xi^a\) such that
\be \label{eq:supetr-vec-field}
    \lim_{\to i^{0}}\Omega^{-1} \xi^{a} = \dd{f} \dd\eta^{a} -  \dd D^{a} \dd{f}\,.
\ee

In summary, the asymptotic symmetries at \(i^0\) are parameterized by a pair \((\dd  f, \dd X^a)\) where \(\dd f\) is a smooth function and \(\dd X^a \in \mf{so}(1,3)\) is a smooth Killing vector field on \(\hyp\).\\

The Lie algebra structure of these symmetries can be obtained as follows. Let \(\xi^a_1\) and \(\xi^a_2\) be the vector fields representing the asymptotic Spi-symmetries \((\dd f_1, \dd X^a_1)\) and \((\dd f_2, \dd X^a_2)\) respectively. Then the Lie bracket \([\xi_1, \xi_2]^a = \xi^b_1 \nabla_b \xi^a_2 - \xi^b_2 \nabla_b \xi^a_1 \) of the representatives induces a Lie bracket on the Spi-symmetries. Using \cref{eq:X-defn,eq:lorentz-cond,eq:f-defn} the induced Lie bracket on the Spi-symmetries can be computed to be
\be\label{eq:spi-bracket}\begin{aligned}
    (\dd f, \dd X^a) &= [ (\dd f_1, \dd X^a_1), (\dd f_2, \dd X^a_2) ]\,, \\
    \text{with}\quad
    \dd f & = \dd X_1^b \dd D_b \dd f_2 - \dd X_2^b \dd D_b \dd f_1\,, \\
    \dd X^a &= \dd X_1^b \dd D_b \dd X_2^a - \dd X_2^b \dd D_b \dd X_1^a\,.
\end{aligned}\ee
Thus, the Spi symmetries form a Lie algebra \(\spi\) with the above Lie bracket structure. Note that if \(\dd X^a_1 = \dd X^a_2 = 0\) then \(\dd f = \dd X^a = 0\) ---  the supertranslations form an infinite-dimensional abelian subalgebra \(\mf s\). Further if \(\dd X^a_1 = 0\) and \(\dd X^a_2 \neq 0\) then \(\dd X^a = 0\), thus the supertranslations \(\mf s\) are a Lie ideal in \(\spi\). The quotient algebra \(\spi/\mf s\) is then isomorphic to the algebra of Killing fields on \(\hyp\) i.e. the Lorentz algebra \(\mf{so}(1,3)\). Thus the Spi symmetry algebra has the structure of a semi-direct sum
\be\label{eq:spi-semi-direct}
    \spi \cong \mf{so}(1,3) \ltimes \mf s\,.
\ee

The \(\spi\) algebra also has a preferred \(4\)-dimensional subalgebra \(\mf t \) of \emph{translations}. These are obtained as the supertranslations \(\dd f\) satisfying the additional condition
\be\label{eq:trans-cond}
    \dd D_a \dd D_b \dd f + \dd h_{ab} \dd f = 0\,.
\ee
The space of solutions to the above condition is indeed \(4\)-dimensional --- this can be seen from the argument in \cref{rem:trans-vectors} below, or by solving the equation in a suitable coordinate system on \(\hyp\); see Eqs.~D.204 and D.205 of \cite{CD} or Eq.~C.12 of \cite{KP-GR-match}. Further from \cref{eq:spi-bracket} it can be verified that the Lie bracket of a translation with any other element of \(\spi\) is again a translation, that is, the translations \(\mf t\) are a \(4\)-dimensional Lie ideal of \(\spi\).

\begin{remark}[Translation vectors at \(i^0\)]\label{rem:trans-vectors}

Let \(\dd v^a\) be a direction-independent vector at \(i^0\), and \(\dd v^a = \dd f \dd \eta^a + \dd f^a\) where \(\dd \eta_a \dd f^a = 0\). Then, since \(\dd v^a\) is direction-independent we have
\be
    0 = \dd\partial_a \dd v_b = \dd D_a \dd f_b + \dd h_{ab} \dd f + \dd\eta_b (\dd D_a \dd f - \dd f_a)\,, 
\ee
which then implies \(\dd f_a = \dd D_a \dd f\) and that \(\dd f\) satisfies \cref{eq:trans-cond}. Thus, any vector \(\dd v^a \in Ti^0\) gives rise to a Spi-translation in \(\mf t\). Conversely, given any translation \(\dd f \in \mf t\), the vector at \(i^0\) defined by (note the sign difference in the hyperboloidal component relative to \cref{eq:supetr-vec-field})
\be\label{eq:trans-vector-defn}
    \dd v^a \defn \dd f \dd \eta^a + \dd D^a \dd f\,,
\ee
is direction-independent i.e., \(\dd v^a \in Ti^0\). Thus, the Spi-translations \(\mf t\) can be represented by vectors in \(Ti^0\).
\end{remark}

\begin{remark}[Conformal transformation of Spi symmetries]\label{rem:conf-symm}
Let \((\dd f, \dd X^a)\) be a Spi symmetry defined by a vector field \(\xi^a\) as above, i.e.,
\be
    \dd X^a \defn \lim_{\to i^0} \Omega^{-\half} \xi^a \eqsp \dd f \defn \lim_{\to i^0} \Omega^{-1} \xi^a \eta_a\,.
\ee
For a fixed \(\xi^a\), consider the change in the conformal factor \(\Omega \mapsto \omega \Omega\). Then, from \cref{rem:conf} we have the transformations
\be
    \dd X^a \mapsto \dd X^a \eqsp \dd f \mapsto \dd f + \tfrac{1}{2} \lie_{\dd X}\dd \alpha\,.
\ee
Note that a pure supertranslation \((\dd f, \dd X^a = 0)\) is conformally-invariant, while a ``pure Lorentz'' symmetry \((\dd f = 0, \dd X^a)\) is not invariant but shifts by a supertranslation given by \(\tfrac{1}{2} \lie_{\dd X}\dd \alpha\). This further reflects the semi-direct structure of the \(\spi\) algebra given in \cref{eq:spi-semi-direct}.
\end{remark}

\hr 

To find the charge corresponding to the Spi-symmetries we need to evaluate the symplectic current \cref{eq:sympcurrent-H-simplified} when the perturbation denoted by \(\delta_2\) is generated by a Spi-symmetry. So we now calculate the perturbations $\delta_{(\dd f, \dd X)} \dd{E}$ and $\delta_{(\dd f, \dd X)} \dd{K}$ in the gravitational potentials corresponding to the metric perturbation \cref{eq:lin-diffeo}.

The potentials \(\dd E\) and \(\dd K_{ab}\) are defined in terms of (a rescaled) limit of \(S_{ab} \) by \cref{eq:potentials-defn}. Consider then the change in $S_{ab}$ under the perturbation \cref{eq:lin-diffeo}. The second term on the right-hand-side of \cref{eq:lin-diffeo} is a linearized conformal transformation (see \cref{rem:conf}) with \(\dd \alpha = - 2 \dd f\). Thus, the change in \(\dd E\) and \(\dd K_{ab}\) induced by this linearized conformal transformation is given by (see \cref{rem:conf-GR-fields})
\be \label{eq:supetr-deltaE-deltaK}
    \delta_{\dd{f}} \dd{E} = 0 \eqsp \delta _{\dd{f}} \dd{K}_{ab} = 4 (\dd D_{a}\dd D_{b} \dd{f} + \dd{f} \dd{h}_{ab})\,.
\ee

The first term on the right-hand-side of \cref{eq:lin-diffeo} is a linearized diffeomorphism and, since \(S_{ab}\) is a local and covariant functional of \(g_{ab}\) the corresponding perturbation in \(S_{ab}\) is \(\Lie_\xi S_{ab}\). Explicitly computing the Lie derivative, using \cref{eq:X-defn,eq:lorentz-cond} gives
\be
    \delta_{\dd X} \dd S_{ab} = \lim_{\to i^{0} } \Omh \lie_{\xi} S_{ab}=\dd X^{c} \dd{\partial}_{c} \dd{S}_{ab} +2 \dd{S}_{c(a}\dd{\eta}_{b)}\dd X^{c}+ 2 \dd{S}_{c(a}\dd{\partial}_{b)}\dd X^{c}\,.
\ee
Then, from the definition of the gravitational potentials \cref{eq:potentials-defn} we have
\be
    \delta_{\dd{X}} \dd{E} = \lie_{\dd X} \dd{E} \eqsp \delta_{\dd{X}} \dd{K}_{ab} = \lie_{\dd{X}} \dd{K}_{ab}\,.
\ee
As a result, under a general Spi symmetry parametrized by $(\dd{f}, \dd{X}^{a})$ we have
\be \label{eq:spi-changes}
    \delta_{(\dd f, \dd X)} \dd{E} = \lie_{\dd X}\dd E \eqsp \delta_{(\dd f, \dd X)} \dd{K}_{ab} = \lie_{\dd X} \dd{K}_{ab} + 4 (\dd D_{a}\dd D_{b} \dd{f} + \dd{h}_{ab} \dd{f})\,.
\ee
Note that our parity condition \cref{eq:parity-E} does not place any further restrictions on these symmetries.\\

\begin{remark}[Special choices of conformal factor]\label{rem:GR-gauge-choice}
The freedom in the conformal factor can be used to impose further restrictions on the potential \(\dd K_{ab}\). We note the following two conditions that have been used in prior work.
\begin{enumerate}
    \item From \cref{eq:conf-K} we see that \(\dd K \defn \dd h^{ab}\dd K_{ab}\) transforms as
    \be
        \dd K \mapsto \dd K - 2 (\dd D^2 \dd\alpha + 3 \dd\alpha)\,.
    \ee
    Now given a choice of conformal factor so that \(\dd K \neq 0\) we can always solve a linear hyperbolic equation for \(\dd\alpha\) on \(\hyp\) and choose a new conformal factor (as in \cref{rem:conf}) so that in the new conformal completion \(\dd K = 0\). This is the choice made in \cite{CD,CDV-Lorentz,Tro}. With this restriction on \(\dd K\) we see from \cref{eq:spi-changes} that the allowed supertranslations are reduced to functions \(\dd f\) which satisfy
    \be\label{eq:st-CD}
        \dd D^2 \dd f + 3 \dd f = 0\,.
    \ee
    \item Consider the restricted class of spacetimes where \(\dd B_{ab} = 0\). Then, the tensor \(\dd K_{ab}\) can be written in terms of a scalar potential \(\dd k\) as in \cref{eq:K-potential}. Comparing \cref{eq:K-potential} with \cref{eq:conf-K} we see that we can choose \(\dd \alpha = \half \dd k\). Then, we can choose a new conformal factor (as in \cref{rem:conf}) so that in the new conformal completion \(\dd K_{ab} = 0\). This is the choice made in \cite{AH,Ash-in-Held}. With this restriction we see from \cref{eq:spi-changes} that the allowed supertranslations are reduced to the translation algebra (\cref{eq:trans-cond}), and the full asymptotic symmetry algebra reduces to the Poincar\'e algebra.
\end{enumerate}
\end{remark}
It is not clear, a priori, what such special choices of conformal factor imply at null infinity. From the point of view of matching the Spi symmetries and charges to the ones defined on null infinity such choices of conformal factors might not be convenient. So we will \emph{not} impose any such conditions on the conformal factor in our analysis and work with the full \(\spi\) algebra. However, we will argue that our results reduce to those of \cite{AH,CD} when the corresponding restrictions are imposed.

%%==========================================================
\section{Spi-charges}
\label{sec:spi-charges}
In this section we now compute the charges associated with the Spi-symmetries. Following our strategy we consider the symplectic current \(\pb{\dd\omega}\) where one of the perturbations, \(\delta_2\), is a perturbation generated by an asymptotic Spi-symmetry represented by \((\dd f, \dd X^a)\). Using \cref{eq:sympcurrent-H-simplified,eq:spi-changes} we have
\be\label{eq:omega-symm}
    \pb{\dd\omega}(\delta g, \delta_{(\dd f, \dd X)} g) = \frac{1} {64\pi} \dd\varepsilon_3 \lb[ \delta\dd K \lie_{\dd X} \dd E - \delta \dd E \lie_{\dd X} \dd K -  4 \delta \dd E (\dd D^2 \dd f + 3 \dd f) \rb]\,. 
\ee
We show next that, under suitable conditions, the above expression can be written as a total derivative on \(\hyp\) that is, 
\be\label{eq:omega-Q}
    \pb{\dd\omega}(\delta g, \delta_{(\dd f, \dd X)} g) = - \dd\varepsilon_3~ \dd D^a \dd Q_a(g; \delta g; (\dd f, \dd X))\,,
\ee
where \(\dd Q_a\) is a local and covariant functional of its arguments on \(\hyp\).

It will be convenient to do this separately for supertranslations and Lorentz symmetries. In \cref{sec:st}, we will find that for supertranslations the functional \(\dd Q_a\) is integrable, and defines the \emph{supermomentum} charges on cross-sections \(S\) of \(\hyp\). Then we show in \cref{sec:lorentz-charge} that for Lorentz symmetries \(\dd Q_a\) is not integrable, in general. In this case we will adopt the prescription of Wald and Zoupas with suitable modifications to define an integrable charge for Lorentz symmetries. Finally, as noted in \cref{rem:conf-symm}, a ``pure Lorentz'' symmetry is not conformally-invariant but shifts by a supertranslation. Similarly, we show in \cref{sec:conf-charges} that the Lorentz charge shifts by a supertranslation charge under conformal transformations, in accord with the semi-direct structure of the \(\spi\) algebra (\cref{eq:spi-semi-direct}).

%%----------------------------------------------------------
\subsection{Charges for supertranslations: Spi-supermomentum}
\label{sec:st}

To define the charge for the supertranslations consider \cref{eq:omega-symm} for a pure supertranslation \((\dd f, \dd X^a = 0)\)
\be\begin{aligned}
    \pb{\dd\omega}(\delta g, \delta_{\dd f} g) & = - \frac{1}{16\pi} \dd\varepsilon_3~ \delta \dd E (\dd D^2 \dd f + 3 \dd f)\,, \\
    & = - \frac{1}{16\pi} \dd\varepsilon_3 \dd D^{a} \delta (\dd{E} \dd D_{a} \dd{f} - \dd{f}\dd D_{a}\dd{E})\,,
\end{aligned}\ee
where the second line uses \cref{eq:box-E}. In this case, the symplectic current can be written in the form \cref{eq:omega-Q} where the \(\dd Q_a\) is manifestly integrable. Thus, we define the Spi \emph{supermomentum} charge at a cross-section \(S\) of \(\hyp\) by 	 
\be \label{eq:st-charge}
    \mathcal{Q}[\dd{f}; S] = \frac{1}{16\pi} \int_S \dd\varepsilon_2~  \dd u^a (\dd{E} \dd D_{a} \dd{f} - \dd{f}\dd D_{a}\dd{E})\,.
\ee
Here we have chosen the charge to vanish on Minkowski spacetime where \(\dd E = 0\) (see \cref{eq:Mink-stuff}). The corresponding flux is given by (using \cref{eq:box-E}) 
\be\label{eq:st-flux}
    \mathcal{F}[\dd{f};\Delta \hyp] \defn \mathcal{Q}[\dd{f}; S_2] - \mathcal{Q}[\dd{f}; S_1] = - \frac{1}{16\pi}\int_{\Delta \hyp} \dd{\varepsilon}_{3}~ \dd{E} (\dd D^{2} \dd{f} + 3\dd{f})\,.
\ee

When \(\dd f \in \mf t\) is a Spi-translation the charge \cref{eq:st-charge} can be written in an alternative form as follows: Using \cref{eq:EB-potentials,eq:box-E} we have the identity
\be\label{eq:st-tr-conversion}\begin{aligned}
    -\dd{f} \dd D_{a} \dd{E} + \dd{E} \dd D_{a} \dd{f} & = 2 \dd{E}_{ab} \dd D^{b} \dd{f} + \dd D^{b} \lb( \dd D_{[a} \dd{E} \dd D_{b]} \dd{f} \rb) \\
    &\quad - \tfrac{1}{2} \lb[ \dd D_a \dd E (\dd D^2 \dd f + 3 \dd f ) - \dd D^b \dd E (\dd D_a \dd D_b \dd f + \dd h_{ab} \dd f) \rb] \,.
\end{aligned}\ee
The second term on the right-hand-side corresponds to an exact \(2\)-form and vanishes upon integrating on \(S\), while the last line vanishes for translations due to \cref{eq:trans-cond}. Hence, the charge for any translation \(\dd f \in \mf t\) can be written as
\be \label{eq:tr-charge}
    \mc Q[\dd{f}; S] = \frac{1}{8\pi} \int_S \dd\varepsilon_2~ \dd u^{a} \dd{E}_{ab} \dd D^{b} \dd{f}\,,
\ee
which reproduces the charge for translations given in \cite{AH}. Using \cref{eq:trans-cond} the flux of translations vanishes across any region \(\Delta\hyp\) and thus the translation charge is independent of the choice of cross-section \(S\). Using the isomorphism between Spi-translations \(\dd f\) and vectors \(\dd v^a\) in \(Ti^0\) (see \cref{rem:trans-vectors}), the translation charge in \cref{eq:tr-charge} defines a \(4\)-momentum vector \(\dd P^a\) at \(i^0\) such that
\be
    \dd P^a \dd v_a = \mc Q[\dd{f}; S]\,.
\ee
Note that this relation is well-defined at \(i^0\) since the translation charge is independent of the cross-section \(S\). The vector \(\dd P^a\) is precisely the ADM \(4\)-momentum at \(i^0\) \cite{AMA-spi-3+1} and also coincides with the limit to \(i^0\) of the Bondi \(4\)-momentum on null infinity \cite{Ash-Mag-Ash} (the corresponding result for all the supertranslation charges was proven in \cite{KP-GR-match}).

The charge expression \cref{eq:st-charge} agrees with the results of Comp\`ere  and Dehouck \cite{CD}. Note that when the conformal factor is chosen so that \(\dd K = 0\) the supertranslation algebra is reduced to the subalgebra satisfying \cref{eq:st-CD} and the flux corresponding to such supertranslations vanishes across any region \(\Delta\hyp\). As was shown in \cite{KP-GR-match}, to relate the supertranslation symmetries and charges at spatial infinity to the ones on null infinity, it is sufficient that the \emph{total} flux of these charges vanishes on \emph{all} of \(\hyp\),\footnote{To make this rigorous it is necessary to additionally complete \(\hyp\) to include the null directions at \(i^0\). This construction is detailed in \cite{KP-EM-match,KP-GR-match}.} and the flux need not vanish across some local region \(\Delta\hyp\). Thus the restriction on the conformal factor imposing \(\dd K = 0\) is not necessary.

Note that in \cite{KP-GR-match} the supermomentum charges at spatial infinity were related to those on null infinity using the Ashtekar-Hansen expression \cref{eq:tr-charge} for \emph{all} supertranslations (even those which are not translations), instead of the expression \cref{eq:st-charge}. On \(\hyp\), these charge expressions differ by the integral of last line of \cref{eq:st-tr-conversion} over some cross-section \(S\). However, the regularity conditions on \(\dd E\) and \(\dd f\) used in \cite{KP-GR-match} as the spatial directions \(\vec\eta\) limit to null directions at \(i^0\) ensure that the additional terms vanish (see, for instance, Appendix.~D of \cite{KP-GR-match}) and both expressions yield the same \emph{finite} supermomenta in null directions which further equals the supermomenta at null infinity. Thus, the result of \cite{KP-GR-match} can also be derived using the expression \cref{eq:st-charge} for the supertranslation charges.

%%---------------------------------------------------------
\subsection{Lorentz charges with \(\dd B_{ab} = 0\)}
\label{sec:lorentz-charge}

Next we will obtain a charge formula for the Lorentz symmetries. As emphasized in \cite{AH,Ash-in-Held}, to obtain such a charge formula one needs to consider the ``subleading'' piece of the magnetic part of the Weyl tensor. Thus, in the following we will make the additional assumption that \(\dd B_{ab} = 0\) and that the ``subleading'' magnetic part \(\dd \beta_{ab}\) defined in \cref{eq:beta-defn} exists. However, in \cref{sec:new-beta} we show how the restriction that \(\dd B_{ab}\) vanishes can be lifted to obtain a charge for the Lorentz symmetries.

For a ``pure Lorentz'' symmetry \((\dd f = 0, \dd X^a)\) we have from \cref{eq:omega-symm}
\be \label{symp-lorentz}
    \pb{\dd\omega}(\delta g, \delta_{\dd X}g) = \frac{1}{64 \pi} \dd{\varepsilon}_{3} ( \lie_{\dd X} \dd{E} \delta \dd{K} - \lie_{\dd X} \dd{K} \delta \dd{E})\,.
\ee
We now want to write this as a total derivative of the form \cref{eq:omega-Q}. To do so consider the following tensor
\be\label{eq:W-defn}
    \dd{W}_{ab} \defn \dd{\beta}_{ab} + \tfrac{1}{8} \dd{\varepsilon}_{cd(a} \dd D^{c} \dd{E} \dd K^d{}_{b)} - \tfrac{1}{16} \dd{\varepsilon}_{abc}\dd{K}\dd D^{c}\dd{E}\,. 
\ee
Using \cref{eq:div-beta,eq:vanishing-curl-K,eq:div-K}, we obtain
\be
    \dd D^a \dd W_{ab} = 0 \eqsp \dd h^{ab} \dd W_{ab} = 0\,.
\ee
Note that \(\dd W_{ab}\) is not a symmetric tensor. Further using \cref{eq:W-defn,eq:box-X} we have
\be \label{div-eq}
    \dd D^{a}[\dd{W_{ab}}\dv{\dd{X}}^{b}] = \tfrac{1}{8} \dd{X}^{a} \dd D_{a}\dd{E} \dd{K}\,,
\ee
where $\dv{\dd{X}}^{a} \defn \f{1}{2} \dd{\varepsilon}^{abc}\dd D_{b} \dd{X}_{c}$ is the ``dual'' Killing vector field to \(\dd X^a\) (see \cref{eq:dual-X-defn}). Therefore, \cref{symp-lorentz} can be written as
\be\label{eq:symp-lorentz-W}
    \pb{\dd\omega}(\delta g, \delta_{\dd X}g) = \frac{1}{8 \pi} \dd{\varepsilon}_3 \dd D^a \lb[  \delta \dd W_{ab} \dv X^b - \tfrac{1}{8} \delta \dd E \dd K \dd X_{a} \rb]\,,
\ee
which is again of the form \cref{eq:omega-Q}. However the functional \(\dd Q_a\) in this case is not integrable, in general. To see this consider
\be\label{eq:Q-lor}
    \int_S \dd\varepsilon_2~ \dd u^a \dd Q_a [\delta g ; \dd X] = - \frac{1}{8 \pi} \int_S \dd\varepsilon_2~ \dd u^a \lb[  \delta \dd W_{ab} \dv X^b - \tfrac{1}{8} \delta \dd E \dd K \dd X_{a} \rb]\,,
\ee
and compute an antisymmetrized second variation to get
\be\label{eq:integrability}\begin{aligned}
    \int_S \dd\varepsilon_2 \dd u^a \big(\delta_1 \dd Q_a [\delta_2 g ; \dd X] - \delta_2 \dd Q_a [\delta_1 g ; \dd X] \big) & = \tfrac{1}{64\pi} \int_S \dd\varepsilon_2 \dd u^a \dd X_a \lb(\delta_1 \dd K \delta_2 \dd E  - \delta_2 \dd K \delta_1 \dd E   \rb) \\
    & = - \int_S \dd X \cdot \pb{\dd\omega}(\delta_1 g, \delta_2 g)\,.
\end{aligned}\ee
If \cref{eq:Q-lor} were integrable then the above antisymmetrized second variation would vanish for all perturbations and all cross-sections \(S\). However, since we allow arbitrary perturbations of both \(\dd E\) and \(\dd K_{ab}\) the expression on the right-hand-side vanishes if and only if the Lorentz vector field happens to be tangent to the cross-section \(S\). However a general Lorentz vector field is not tangent to any cross-section of \(\hyp\), in particular Lorentz boosts do not preserve any cross-section of \(\hyp\). Thus, the expression \cref{eq:Q-lor} is not integrable and cannot be used to define the charge of Lorentz symmetries.

To remedy this, note that \cref{eq:integrability} is similar to the integrability criterion derived by Wald and Zoupas (see Eq.~16 of \cite{WZ}). Wald and Zoupas further developed a general prescription to define a integrable charge (``conserved quantity'') which we now adapt to our case. Let \(\dd\Theta(g;\delta g)\) be a \(3\)-form on \(\hyp\) which is a symplectic potential for the pullback of the symplectic current (\cref{eq:sympcurrent-H-simplified}) to $\hyp$, that is,
\be\label{eq:Theta-symppot}
    \pb{\dd\omega}(g; \delta_{1} g, \delta_{2} g) =\delta_{1} \dd{\Theta}(g; \delta_{2}g) - \delta_{2} \dd{\Theta}(g; \delta_{1}g)\,, 
\ee
for all backgrounds and all perturbations. We also require that the choice of \(\dd\Theta\) satisfy the following conditions
\begin{enumerate}
    \item \(\dd\Theta\) is locally and covariantly constructed out of the dynamical fields \((\dd E, \dd K_{ab})\), their perturbations, and finitely many of their derivatives, along with the ``universal background structure'' \(\dd h_{ab}\) present on \(\hyp\).
    \item \(\dd\Theta\) is independent of any arbitrary choices made in specifying the background structure, in particular, \(\dd\Theta\) is conformally-invariant.
    \item  $\dd\Theta(g;\delta g) = 0$ for Minkowski spacetime for \emph{all} perturbations $\delta g$.
\end{enumerate}

In analogy to the Wald-Zoupas prescription we define the charge \(\mc Q[\dd X^a; S]\) associated with a Lorentz symmetry through
\be\label{eq:WZ-charge}
    \delta \mc Q[\dd X^a ; S] \defn \int_S \dd\varepsilon_2 \dd u^a \dd Q_a(\delta g; \dd X^a) + \int_S \dd X \cdot \dd\Theta(\delta g)\,. 
\ee
From \cref{eq:integrability,eq:Theta-symppot} it follows that the above defining relation is integrable and thus defines a charge \(\mc Q[\dd X^a ; S]\) once we pick a reference solution where the charge vanishes.\\

For the \(3\)-form \(\dd\Theta\) we choose
\be \label{eq:Theta-choice}
    \dd{\Theta}(g;\delta g) \defn -\frac{1}{64\pi}\dd{\varepsilon}_{3} \dd{E} \delta \dd{K} \,.
\ee
It can be verified that this choice satisfies all the criteria listed below \cref{eq:Theta-symppot}. In particular \(\dd\Theta\) is conformally-invariant, and for Minkowski spacetime \(\dd E = 0\) (\cref{eq:Mink-stuff}) and so \(\dd\Theta = 0\) on Minkowski spacetime for \emph{all} perturbations. This choice for \(\dd\Theta\) is not unique, but we will argue in \cref{sec:amb} that the ambiguity in the the choice of \(\dd\Theta\) does not affect our final charge expression.

With the choice \cref{eq:Theta-choice} and \cref{eq:Q-lor,eq:WZ-charge}, we have
\be
    \delta \mc Q[\dd{X}^{a};S] = - \frac{1}{8 \pi} \int_S\dd{\varepsilon}_{2}~ \dd u^{a} \delta [ \dd{W}_{ab} \dv{\dd{X}}^{b} - \tfrac{1}{8} \dd{K} \dd{E} \dd{X}_{a}]\,,
\ee
We define the unperturbed charge by picking the reference solution to be Minkowski spacetime which satisfies \(\dd E = 0\) and \(\dd \beta_{ab} = 0\) (\cref{eq:Mink-stuff}). Thus, we have the charge
\be\label{eq:lorentz-charge}
    \mc Q[\dd{X}^{a};S] = - \frac{1}{8 \pi} \int_S\dd{\varepsilon}_{2}~ \dd u^{a} [ \dd{W}_{ab} \dv{\dd{X}}^{b} - \tfrac{1}{8} \dd{K} \dd{E} \dd{X}_{a}]\,,
\ee
The corresponding flux of the Lorentz charges is given by
\be\label{eq:lorentz-flux}
    \mathcal{F}[\dd{X}^{a}, \Delta \hyp] = - \frac{1}{64\pi}\int_{\Delta \hyp} \dd\varepsilon_3~ \dd{E} \lie_{\dd X}\dd{K}\,. 
\ee
Note that the flux is essentially given by \(\mathcal{F}[\dd{X}^{a}, \Delta \hyp] = \int_{\Delta\hyp} \dd\Theta(g;\delta_{\dd X}g)\) in analogy to the Wald-Zoupas prescription (see Eq.~32 of \cite{WZ}).\\

When the conformal factor is chosen so that \(\dd K_{ab} = 0\) then the Lorentz charge reduces to
\be\label{eq:lorentz-charge-AH}
    \mc Q [\dd X^a; S] = - \f{1}{8 \pi } \int_S \dd\varepsilon_{2}~ \dd u^{a} \dd\beta_{ab} \dv X^b\,,
\ee
which is the expression given by \cite{AH}. Note that when the conformal factor is chosen such that \(\dd K = 0\), the expression \cref{eq:Q-lor} is manifestly integrable and our ``correction term'' \(\dd\Theta\) (\cref{eq:Theta-choice}) vanishes. In both these cases, the flux of the Lorentz charges vanishes across any region \(\Delta\hyp\), i.e., the Lorentz charges are identically conserved. Further, since the vector fields \(\dd X^a\) correspond precisely to infinitesimal Lorentz transformations \(\dd\Lambda_{ab}\) in \(Ti^0\) (see \cref{eq:X-Lambda}), the charge defines an ``angular momentum'' tensor \(\dd J^{ab}\) at \(i^0\) through
\be
    \dd J^{ab} \dd\Lambda_{ab} = \mc Q [\dd X^a; S]\,,
\ee
where the right-hand-side is independent of the cross-section since the charge is conserved.

%%--------------------------------------------------------
\subsection{Transformation of charges under conformal changes}
\label{sec:conf-charges}

We now consider the transformation of the charges and fluxes for a Spi symmetry under changes of the choice of conformal factor as discussed in \cref{rem:conf}.

Consider a pure supertranslation symmetry \((\dd f, \dd X^a = 0)\). As shown in \cref{rem:conf-symm}, a pure supertranslation is conformally-invariant. Further from \cref{rem:conf-GR-fields} the potential \(\dd E\) is also conformally-invariant. Thus, the charge and flux of supertranslations in \cref{eq:st-charge,eq:st-flux} is also conformally-invariant.

However a ``pure Lorentz'' symmetry \((\dd f = 0, \dd X^a)\) is not conformally-invariant (see \cref{rem:conf-symm}), and hence we expect that the charge and flux of a Lorentz symmetry must transform nontrivially under changes of the conformal factor. Consider first the flux of Lorentz charges given by \cref{eq:lorentz-flux}. Using the transformation of \(\dd K_{ab}\) (\cref{eq:conf-K}) we see that this flux expression transforms as
\be
    \mc F[\dd X^a; \Delta \hyp ] \mapsto \mc F[\dd X^a; \Delta \hyp ] + \frac{1}{32\pi} \int_{\Delta\hyp} \dd\varepsilon_3 \dd E (\dd D^2 \Lie_{\dd X} \dd\alpha + 3 \Lie_{\dd X} \dd\alpha)\,.
\ee
Comparing the second term on the right-hand-side to \cref{eq:st-flux}, we see that it is precisely the flux of a supertranslation given by \((-\half \Lie_{\dd X} \dd\alpha)\). Thus, under a change of conformal factor the Lorentz flux shifts by the flux of a supertranslation
\be\label{eq:conf-lorentz-flux}
    \mc F[\dd X^a; \Delta \hyp ] \mapsto \mc F[\dd X^a; \Delta \hyp ] + \mc F[-\half \Lie_{\dd X}\dd\alpha; \Delta\hyp]\,.
\ee

One can similarly verify that the Lorentz charge \cref{eq:lorentz-charge} also shifts by the charge of a supertranslation. The explicit computation is a bit tedious and is presented in \cref{sec:conf-lorentz-charge}. However, we can derive the transformation of the Lorentz charge by a more general argument which we present below. This argument also holds in the more general case when \(\dd B_{ab} \neq 0\) considered in \cref{sec:new-beta} below.

From the transformation of the flux \cref{eq:conf-lorentz-flux}, we can deduce that the Lorentz charge expression \cref{eq:lorentz-charge} must transform as
\be
    \mc{Q}[\dd X^a; S] \mapsto \mc Q[\dd X^a; S] + \mc Q[-\nfrac{1}{2}\lie_{\dd X}\dd \alpha; S] + \int_S \dd\varepsilon_2 \dd u^a  \dd \mu_a[\dd\alpha]\,,
\ee
where the second term on the right-hand-side is the charge of a supertranslation \((-\half \Lie_{\dd X}\dd\alpha)\) and the third term is a possible additional term determined by a covector \(\dd\mu_a\) which depends linearly on \(\dd\alpha\) and is divergence-free, \(\dd D^a \dd\mu_a[\dd\alpha] = 0\) for \emph{all} \(\dd\alpha\). Since \(\dd\alpha\) is a free function on \(\hyp\) we can apply \cref{thm:Wald} with \(\dd\alpha\) as the ``dynamical field''. Thus, from \cref{eq:Wald-hyp} we conclude that the final integral above vanishes, and that the Lorentz charge shifts by the charge of a supertranslation \((-\half \Lie_{\dd X}\dd\alpha)\).
\be\label{eq:lor-charge-trans-summ}
   \mc{Q}[\dd X^a; S] \mapsto \mc Q[\dd X^a; S] + \mc Q[-\nfrac{1}{2}\lie_{\dd X}\dd \alpha; S]\,.
\ee

If we restrict to the choice of conformal factor where \(\dd K_{ab} = 0\), so that the asymptotic symmetries are reduced to the Poincar\'e algebra and \(\dd\alpha\) is a Spi-translation satisfying \cref{eq:trans-cond}, then \cref{eq:lor-charge-trans-summ} reproduces the transformation law given in Eq.~29 of \cite{AH} and Eq.~6.8 of \cite{Ash-in-Held}.

Consider the charge of any Spi-symmetry represented by \((\dd f, \dd X^a)\), then under a conformal transformation the same Spi-symmetry is now represented by \((\dd f + \nfrac{1}{2} \lie_{\dd X}\dd\alpha, \dd X^a)\) (see \cref{rem:conf-symm}). The total charge of the Spi-symmetry transforms as
\be\label{eq:total-charge-inv}\begin{aligned}
    \mc{Q}[\dd f; S] + \mc{Q}[\dd X^a; S] \mapsto & \mc{Q}[\dd f + \nfrac{1}{2} \lie_{\dd X}\dd\alpha; S] + \mc{Q}[\dd X^a; S] + \mc Q[-\nfrac{1}{2}\lie_{\dd X}\dd \alpha; S]\,, \\
    & = \mc{Q}[\dd f; S] + \mc{Q}[\dd X^a; S]\,,
\end{aligned}\ee
that is, the charge of any Spi-symmetry is independent of the choice of conformal factor --- the change in the function \(\dd f\) representing the symmetry is exactly compensated by the change in the Lorentz charge given in \cref{eq:lor-charge-trans-summ}.

%%=========================================================
\section{Discussion}
\label{sec:disc}
In this paper, we analyzed the asymptotic symmetries and the corresponding charges for asymptotically-flat spacetimes at spatial infinity \(i^0\) using the Ashtekar-Hansen formalism, without any restrictions on the choice of the conformal factor at spatial infinity, which were imposed in previous analyses. Using the covariant phase space, we considered the direction-dependent limit of symplectic current of vacuum general relativity to spatial infinity. Using the pullback of this limit of the symplectic current to the space of spatial directions \(\hyp\) at spatial infinity, we obtained expressions for charges corresponding to all asymptotic symmetries. We rederived the known expressions for supertranslation charges but more a general expression for the Lorentz charge when conformal factor is completely unrestricted. In this case, we used a Wald-Zoupas type correction to make the Lorentz charge integrable, which also ensures that this charge transforms correctly under the action of a supertranslation, or equivalently, that the charge of a general Spi-symmetry is conformally-invariant.

The main motivation behind our analysis is to eventually relate the Lorentz charges at spatial infinity to the ones defined on null infinity. In this context, the Lorentz charge expressions would have to be matched in the ``same'' choice of conformal factor at both null infinity and spatial infinity, and it is not clear what the restrictions on the conformal factor at spatial infinity placed in previous works imply at null infinity. Thus, we hope that our more general expression for the Lorentz charge at spatial infinity will be more useful to repeat the matching analysis for the case of Lorentz symmetries that was done previously for Maxwell theory \cite{KP-EM-match} and supertranslations in general relativity \cite{KP-GR-match}. If this works out as expected, this would imply that the full BMS group at past null infinity is matched to the full BMS group at future null infinity and moreover, that the incoming fluxes of all BMS symmetries through past null infinity are equal to the outgoing fluxes of the anitpodally identified BMS symmetries through future null infinity. This would then prove the existence of infinitely many conservation laws, one for each generator of the BMS group, in classical gravitational scattering in asymptotically-flat spacetimes, as anticipated by Strominger \cite{Stro-CK-match}.

Another avenue for future investigation would be to quantize the asymptotic fields on \(\hyp\) in the spirit of the asymptotic quantization program on null infinity \cite{Ashtekar:1987tt}, see also \cite{Alexander1984}. This could lead to the possibility of relating the asymptotic ``in-states'' on past null infinity to the ``out-states'' on future null infinity, similar to the matching conditions in the classical theory, and provide further insight into the structure of quantum scattering.

We also note that the asymptotic fields at spatial infinity in both Maxwell theory and general relativity are described by smooth tensor fields living on a unit-hyperboloid $\hyp$. As is well-known \(\hyp\) is precisely the \(3\)-dimensional de Sitter spacetime. To prove the matching conditions for Maxwell and gravitational fields on \(\hyp\) with those on null infinity, \(\hyp\) was conformally-completed into a cylinder in the analysis of \cite{KP-GR-match,KP-EM-match}. It would be interesting to see if insights from the de Sitter/CFT correspondence \cite{dS-CFT} can be applied to develop a holographic understanding of electromagnetism and general relativity in asymptotically-flat spacetimes at spatial infinity, perhaps similar to \cite{Mink-CFT}.

%%======================================================
\section*{Acknowledgements}

We thank \'Eanna \'E. Flanagan for helpful discussions and constant encouragement over the course of this work. IS would also like to thank D. Iozzo for help with \textsc{xAct}. This work is supported in part by the NSF grant PHY-1707800 to Cornell University. Some calculations used the computer algebra system \textsc{Mathematica} \cite{Mathematica}, in combination with the \textsc{xAct/xTensor} suite~\cite{JMM:xAct,MARTINGARCIA2008597}, and the Riemannian Geometry and Tensor Calculus package \cite{RGTC}.

\appendix

%%===========================================================
\section{Coordinates, universal structure and asymptotic expansions near \(i^0\)}
\label{sec:coord}

In this appendix we construct a suitable asymptotic coordinate system near spatial infinity. Using these coordinates we explicitly demonstrate the universal structure near \(i^0\) described in \cref{sec:univ-str}. We also describe the asymptotic expansion of the unphysical and physical metrics in these coordinates, thus making contact with the expansions used in previous works \cite{Beig-Schmidt,CD,CDV-Lorentz}.

Consider the unphysical spacetime \((M,g_{ab})\) obtained from some physical spacetime satisfying \cref{def:AH}. The unphysical metric \(\dd g_{ab}\) at \(i^0\) induces a metric which is isometric to the Minkowski metric in the tangent space \(Ti^0\). Thus we can introduce asymptotically Cartesian coordinates \((t,x,y,z)\) so that \(i^0\) is at the origin of this coordinate system and
\be
    \dd g_{ab} \equiv -dt^2 + dx^2 + dy^2 + dz^2\,.
\ee
Note that \(x^i = (t,x,y,z)\) define a \(C^1\) coordinate system at \(i^0\). To define a \(C^{>1}\) differential structure we allow any other coordinate chart \({x'}^i(x)\) such that
\be\label{eq:C>1-struc}
    \frac{\partial^2 {x'}^i(x)}{\partial x^j \partial x^k} \text{ and } \frac{\partial^2 x^i(x')}{\partial {x'}^j \partial {x'}^k} \text{ are } C^{>-1} \text{ at } i^0\,.
\ee
A collection of all coordinate charts related by \cref{eq:C>1-struc} defines a choice of \(C^{>1}\)-structure on \(M\) at \(i^0\), see \cite{Herb-dd} and Appendix~A of \cite{KP-EM-match} for details.

It is more convenient to use coordinates which are adapted to the space of unit spacelike directions \(\hyp\). Thus define \((\rho, \tau)\) by
\be
    \rho^2 \defn - t^2 + x^2 + y^2 + z^2 \eqsp \tanh \tau \defn \frac{t}{\sqrt{x^2 + y^2 + z^2}}\,.
\ee
In these coordinates the metric in \(Ti^0\) takes the form
\be
    \dd g_{ab} \equiv d\rho^2 + \rho^2 \lb( - d\tau^2 + \cosh^2\tau s_{AB} d\theta^A d\theta^B \rb)\,,
\ee
where \(s_{AB}\) is the unit metric on \(\bb S^2\) in some coordinates \(\theta^A\), say the usual \((\theta,\phi)\) coordinates. Note that the coordinates \((\rho,\tau,\theta^A)\) are \emph{not} \(C^{>1}\) coordinates --- the bases \((d\rho, \rho d\tau, \rho d\theta^A)\) are not continuous but are direction-dependent at \(i^0\).

 The unit spatial directions \(\vec\eta\) then correspond to the unit vectors \(\partial_\rho\) in \(Ti^0\) which are parameterized by \((\tau, \theta^A)\). The space of directions \(\hyp\) is then the surface \(\rho = 1\) in \(Ti^0\) with the induced metric
\be
    \dd h_{ab} \equiv - d\tau^2 + \cosh^2\tau s_{AB} d\theta^A d\theta^B\,.
\ee
The reflection of the directions \(\vec\eta \mapsto - \vec\eta\) then induces the reflection isometry
\be\label{eq:reflection}
    (\tau,\theta^A) \mapsto (-\tau, -\theta^A)\,,
\ee
on \(\hyp\), where \(\theta^A \mapsto -\theta^A\) is the antipodal reflection on \(\bb S^2\).

So far we have only considered the structure at \(i^0\) , now we extend the metric away from \(i^0\). Since the unphysical metric \(g_{ab}\) is \(C^{>0}\) and limits to \(\dd g_{ab}\) at \(i^0\) (where \(\rho = 0\)), it can be verified that \(g_{ab}\) admits an expansion in \(\rho\) of the form 
\be\label{eq:g-exp}\begin{aligned}
    g_{ab} & \equiv \lb[ 1 + \sigma \rho + o(\rho) \rb]^2 d\rho^2 + 2 \lb[ \rho A_a + o\left(\rho\right) \rb] d \rho (\rho d y^{a}) \\
    &\quad + \lb[ h_{ab}^{(0)} + \rho h_{ab}^{(1)}
 + o(\rho) \rb] (\rho dy^{a}) (\rho dy^{b})\,,
\end{aligned}\ee
where \(y^a = (\tau, \theta^A)\) are coordinates on the unit hyperboloid, and \(h_{ab}^{(0)} \equiv \dd h_{ab}\) is the unit hyperboloid metric. The expansion coefficients \(\sigma\), \(A_a\) and \(h_{ab}^{(1)}\) can be considered as tensor fields on \(\hyp\). The \(o(\rho)\) denotes terms which falloff faster than \(\rho\) in the limit to \(i^0\), that is, \(\lim_{\rho \to 0} \rho^{-1} o(\rho) = 0\).

For the conformal factor, one can choose
\be\label{eq:Omega-BS}
    \Omega = \rho^2\,,
\ee
which can be verified to satisfy all the conditions in \cref{def:AH}, that is, in the limit \(\rho \to 0\), \(\Omega = 0\), \(\nabla_a \Omega = 0 \) and \(\nabla_a \nabla_b \Omega = 2 g_{ab}\). Before considering the physical metric lets analyze the universal structure at \(i^0\).

From the above discussion it is clear that the metric \(\dd g_{ab}\) and the space of directions \(\hyp\) is universal, that is, independent of which unphysical metric is chosen. What is the structure corresponding to the equivalence classes of \(C^{>1}\) curves described in \cref{sec:univ-str}? Consider the \(C^{>1}\) curves \(\Gamma_v\) through \(i^0\) with tangents \(v^a \equiv \partial_\rho\) in these coordinates. Further, with the choice of conformal factor in \cref{eq:Omega-BS} we have
\be
    \eta^a = \nabla^a \Omh \equiv (1 - 2 \rho \sigma ) \frac{\partial}{\partial \rho} + \rho h^{(0)ab} A_b \frac{\partial}{\rho\partial y^a} + o(\rho)\,.
\ee
From \cref{eq:equiv-curves} we see that the curves \(\Gamma_v\) (with tangent \(v^a \equiv \partial_\rho\)) will be equivalent to the curves \(\Gamma_\eta\) (with tangent \(\eta^a\)) for all spacetimes if we can always choose \(A_a\) to vanish. This can be accomplished using the freedom in the choice of the hyperboloid coordinates \(y^a\) at ``next order'' in \(\rho\). Consider the coordinate transformation\footnote{This is essentially the unphysical spacetime version of the coordinate transformations consider in Lemma~2.2 of \cite{Beig-Schmidt}.}
\be\label{eq:BS-trans}
    \rho \mapsto \rho \eqsp y^a \mapsto y^a + \rho h^{(0)ab} A_b\,.
\ee
By rewriting this in terms of the Cartesian coordinates \(x^i = (t,x,y,z)\), it can be verified that the transformation \cref{eq:BS-trans} is a \(C^{>1}\) coordinate transformation (\cref{eq:C>1-struc}). It can be also be verified that using this transformation the \(d\rho dy^a\) term in the metric, i.e. \(A_a\), vanishes in the new coordinates. Thus, the curves \(\Gamma_v\) and \(\Gamma_\eta\) can always be chosen to be equivalent. Further, this choice can always be made in any choice of the physical spacetime. Thus, the equivalence classes of \(C^{>1}\) curves through \(i^0\) is also universal.

Having made this choice the unphysical metric takes the form
\be\begin{aligned}
    g_{ab} & \equiv \lb[ 1 + \sigma \rho + o(\rho) \rb]^2 d\rho^2 + \rho\, o\left(\rho\right) d \rho d y^{a} +\rho^{2}\lb[ h_{ab}^{(0)} + \rho h_{ab}^{(1)}
 + o(\rho) \rb] dy^{a} dy^{b}\,.
\end{aligned}\ee
To get the form of the physical metric \(\hat g_{ab} = \Omega^{-2} g_{ab}\) we use \cref{eq:Omega-BS} and define the Beig-Schmidt coordinate \(\rho_\bs \defn 1/\rho\) to obtain 
\be\begin{aligned}
    \hat g_{ab} & \equiv \lb[ 1 + \frac{\sigma}{\rho_\bs} + o(1/\rho_\bs) \rb]^2 d \rho_\bs^{2} + \rho_\bs o(1/\rho_\bs) d \rho_\bs d y^{a} \\
    &\quad + \rho_\bs^{2}\lb[ h_{ab}^{(0)} + \frac{h_{ab}^{(1)}}{\rho_\bs} + o(1/\rho_\bs) \rb] dy^{a} dy^{b}\,,
\end{aligned}\ee
This is the form of the physical metric assumed by Beig and Schmidt \cite{Beig-Schmidt}.

The asymptotic potentials \cref{eq:potentials-defn} are related to the metric coefficients in the above expansion by
\be\label{eq:EK-to-BS}
	\dd E \equiv 4 \sigma \eqsp \dd K_{ab} \equiv -2 ( h_{ab}^{(1)} + 2 \sigma h_{ab}^{(0)})\,.
\ee
From these the asymptotic Weyl tensors can be computed using \cref{eq:EB-potentials}. Note that the parity condition \cref{eq:parity-E} imposed on \(\dd E\) to eliminate the logarithmic translation ambiguity then corresponds to
\be\label{eq:parity-sigma}
    \sigma(\tau,\theta^A) = \sigma(-\tau, -\theta^A)\,.
\ee
From \cref{eq:EK-to-BS} it straightforward to see that our charges for supertranslations \cref{eq:st-charge} matches the expression obtained by Comp\`ere and Dehouck, Eq.~4.88 of \cite{CD}.

For the ``subleading'' magnetic Weyl tensor \(\dd \beta_{ab}\) (defined by \cref{eq:beta-defn} when \(\dd B_{ab} = 0\)) to exist, we need additional regularity conditions on the metric expansion \cref{eq:g-exp}. Thus, to define \(\dd\beta_{ab}\) we assume the ``next order'' expansion
\be\label{eq:g-exp-2}\begin{aligned}
    g_{ab} & \equiv \lb[ 1 + \sigma \rho + o(\rho) \rb]^2 d\rho^2 + \rho\, o(\rho) d \rho d y^{a} \\
    &\quad +\rho^{2}\lb[ h_{ab}^{(0)} + \rho h_{ab}^{(1)} + \rho^2 h_{ab}^{(2)} + o(\rho^2) \rb] dy^{a} dy^{b}\,,
\end{aligned}\ee
where \(h_{ab}^{(2)}\) is a smooth tensor on \(\hyp\). Then, we have (using \(\dd B_{ab} = 0\))
\be \label{eq:h2-beta}
    \dd\beta_{ab} = \dd\varepsilon_{cd(a} \dd D^{c} h^{(2) d}_{b)} - \tfrac{1}{8} \dd\varepsilon_{cd(a} \dd D^{c} \dd E  \dd K^d{}_{b)} - \tfrac{1}{16} \dd\varepsilon_{cd(a}  \dd D_{b)} \dd K^{ce} \dd K^d{}_e\,.
\ee
When the conformal factor is chosen so that \(\dd K_{ab} = 0\), the above expression simplifies considerably. In this case, our Lorentz charge matches the one found by Comp\`ere, Dehouck and Virmani \cite{CDV-Lorentz}. We discuss the case when \(\dd B_{ab}\neq 0\) in \cref{sec:new-beta}.

%%=============================================================
\section{Some useful relations on \(\hyp\)}
\label{sec:useful-H}

In this appendix we collect some relations on the unit-hyperboloid \(\hyp\) which are useful in the main paper.

The Riemann tensor of \(\hyp\) is given by
\be\label{eq:Riem-hyp}
    \dd{\mc R}_{abcd} = \dd h_{ac} \dd h_{bd} - \dd h_{ad} \dd h_{bc}\,.
\ee
Using the above it is easy to derive simple expressions for commuting derivatives on tensor fields on \(\hyp\), see Appendix~A of \cite{Beig-Schmidt}.

%%-------------------------------------------------------------
\subsection{Killing vector fields}
\label{sec:Killing-H}

Let \(\dd X^a\) be a Killing vector field on \(\hyp\), so that \(\dd D_{(a} \dd X_{b)} = 0 \). For any Killing vector field using Eq.~C.3.6 of \cite{Wald-book} and \cref{eq:Riem-hyp} we have
\be\label{eq:DDX}
    \dd D_a \dd D_b \dd X_c  = \dd{\mc R}_{cbad} \dd X^d = \dd h_{ac} \dd X_b - \dd h_{ab} \dd X_c\,.
\ee
Contracting the indices \(a\) and \(b\) we get
\be\label{eq:box-X}
    \dd D^2 \dd X_a + 2 \dd X_a = 0\,.
\ee

Define the ``dual'' vector field \(\dv X^a\) on \(\hyp\) for any Killing vector field \(\dd X^a\) by
\be\label{eq:dual-X-defn}
\dv X^a \defn \tfrac{1}{2} \dd\varepsilon^{abc} \dd D_b \dd X_c\,.
\ee
Then, using \cref{eq:DDX} we have
\be\label{eq:X-dual-X}\begin{aligned}
	\dd D_a \dv X_b = \dd\varepsilon_{abc} \dd X^c \eqsp \dd X^a = - \tfrac{1}{2} \dd\varepsilon^{abc} \dd D_b \dv X_c = - \dv (\dv X)^a \eqsp \dd D_a \dd X_b = - \dd\varepsilon_{abc} \dv X^c\,.
\end{aligned}\ee
In particular \(\dd D_{(a} \dv X_{b)} = 0\) so \(\dv X^a\) is also a Killing vector field on \(\hyp\). In a suitable choice of coordinates on \(\hyp\) this relation maps Lorentz rotations and Lorentz boosts into each other, see Appendix~B of \cite{CDV-Lorentz}.\\

The relationship between the Killing vector fields on \(\hyp\) and Lorentz transformations in the tangent space \(Ti^0\) is as follows. Let \(\dd\Lambda_{ab}\) be a \emph{direction-independent} antisymmetric tensor at \(i^0\) corresponding to an infinitesimal Lorentz transformation in \(Ti^0\). Then the \emph{direction-dependent} vector field defined by\footnote{The relation \cref{eq:X-Lambda} is the ``dual'' of the relation used below Eq.~27 of \cite{AH}.}
\be\label{eq:X-Lambda}
    \dd X^a(\vec\eta) \defn \dd\Lambda^{ab} \dd\eta_b\,,
\ee
is tangent to \(\hyp\). Further, since \(\dd\Lambda_{ab}\) is direction-independent, \(\dd\partial_c \dd\Lambda_{ab} = 0\). Projecting the indices of \(\dd\partial_c \dd\Lambda_{ab} = 0\) tangent and normal to \(\hyp\) in all possible ways it follows that \(\dd X^a\) is a Killing vector field on \(\hyp\) and
\be
    \dd\Lambda_{ab} = - \dd D_a \dd X_b - 2 \dd\eta_{[a} \dd X_{b]}  = \dd\varepsilon_{abc} \dv X^c + \dd\eta_{[a} \dd\varepsilon_{b]cd} \dd D^c \dv X^d\,, 
\ee
where the last equality uses \cref{eq:X-dual-X}. Similarly, it can be shown that if \(\dd X^a\) is the Killing vector field on \(\hyp\) corresponding to \(\dd\Lambda_{ab}\) through \cref{eq:X-Lambda}, then \((- \dv X^a)\) is the Killing vector field on \(\hyp\) corresponding to the ``dual'' Lorentz transformation \(*\dd\Lambda_{ab} \defn \tfrac{1}{2} \dd\varepsilon_{ab}{}^{cd}\dd\Lambda_{cd}\).

%%--------------------------------------------------------
\subsection{Symmetric tensors}
\label{sec:symm-tensors}

Let \(\dd T_{ab}\) be any symmetric tensor on \(\hyp\). Then \(\dd T_{ab}\), its curl and divergence are related by the identity
\be\label{eq:IBP1}
    - 2 \dd T_{ab} \dv X^b + 2 \dd\varepsilon_{cd(a} \dd D^c \dd T^d{}_{b)} \dd X^b - \dd D_c \dd T^{cb} \dd D_a \dv X_b =  \dd D^b \lb( \dd\varepsilon_{abc} \dd T^c{}_d \dd X^d + 2 \dd T^c{}_{[a} \dd D_{b]} \dv X_c \rb)\,,
\ee
where \(\dd X^a\) is any Killing vector on \(\hyp\) and \(\dv X^a\) is the corresponding ``dual'' Killing vector (\cref{eq:dual-X-defn}). This identity can be verified by expanding out the right-hand-side and using \cref{eq:Riem-hyp,eq:dual-X-defn,eq:X-dual-X}. Note that the right-hand-side of \cref{eq:IBP1} corresponds to an exact \(2\)-form on \(\hyp\), and thus vanishes when integrated over any cross-section \(S\) of \(\hyp\). This gives the following useful integral identity on any cross-section \(S\)
\be\label{eq:IBP}
    \int_S \dd\varepsilon_2~ \dd u^a \dd T_{ab} \dv X^b =  \int_S \dd\varepsilon_2~ \dd u^a \lb[ \dd\varepsilon_{cd(a} \dd D^c \dd T^d{}_{b)} \dd X^b - \tfrac{1}{2}\dd D_c \dd T^{cb} \dd D_a \dv X_b \rb]\,.
\ee

In the following lemma we show that any symmetric, curl-free tensor on \(\hyp\) admits a scalar potential. A proof using a choice of coordinates on \(\hyp\) can be found in Appendix~A of \cite{CDV-Lorentz}. Our proof below is adapted from similar arguments for a \(2\)-sphere in Appendix~A.4 of \cite{AK}. 

\begin{lemma}\label{lem:scalar-pot}
Let \(\dd T_{ab}\) be a symmetric tensor on \(\hyp\) with vanishing curl, i.e, \(\dd D_{[c} \dd T_{a]b} = 0\) then there exists a function \(\dd t\) on \(\hyp\) such that
\be
    \dd T_{ab} = \dd D_a \dd D_b \dd t + \dd h_{ab} \dd t\,.
\ee 
\end{lemma}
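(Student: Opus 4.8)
The plan is to realize $\dd t$ as a solution of an overdetermined but \emph{integrable} first-order linear system on $\hyp$, and then to integrate that system globally using the fact that $\hyp$ is simply connected. Concretely, I would adjoin an auxiliary covector $\dd V_a$ and impose the prolonged system
\be\label{eq:scalarpot-system}
    \dd D_a \dd t = \dd V_a \eqsp \dd D_a \dd V_b = \dd T_{ab} - \dd h_{ab}\dd t\,.
\ee
A simultaneous solution $(\dd t,\dd V_a)$ of \cref{eq:scalarpot-system} at once yields the lemma: substituting the first equation into the second gives $\dd D_a\dd D_b\dd t = \dd T_{ab} - \dd h_{ab}\dd t$, i.e. the claimed identity. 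Thus the whole problem reduces to showing that \cref{eq:scalarpot-system} admits a global solution, which I would phrase as the existence of a parallel section of the rank-$4$ bundle $\bb R \oplus T^*\hyp$ equipped with the connection whose parallel sections are exactly the solutions of \cref{eq:scalarpot-system}.

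The key step is to check that this prolongation connection is \emph{flat}, so that its curvature — the integrability obstruction — vanishes. There are two integrability conditions. The condition coming from $\dd D_{[c}\dd D_{a]}\dd t$ is automatic, since scalar second derivatives commute and the right-hand side $\dd T_{ca} - \dd h_{ca}\dd t$ is symmetric because $\dd T_{ab}$ is symmetric. The nontrivial condition comes from comparing $\dd D_{[c}\dd D_{a]}\dd V_b$, computed on one hand from the Ricci identity and on the other by differentiating the second equation in \cref{eq:scalarpot-system}. Using the constant-curvature form of the Riemann tensor \cref{eq:Riem-hyp}, the Ricci identity gives $\dd D_{[c}\dd D_{a]}\dd V_b = \tfrac12(\dd h_{cb}\dd V_a - \dd h_{ab}\dd V_c)$, while differentiating the equation produces precisely the same curvature term \emph{plus} the antisymmetrized derivative $\dd D_{[c}\dd T_{a]b}$. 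The curvature contributions cancel identically against the terms generated by the $\dd h_{ab}\dd t$ piece, so that the full obstruction collapses to $\dd D_{[c}\dd T_{a]b}$, which vanishes by the hypothesis of the lemma. I expect this cancellation to be the main (and only genuinely computational) obstacle; everything hinges on the fact that $\hyp$ is a space of constant curvature, so that the curvature correction matches exactly the inhomogeneous term.

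Once flatness is established, the argument concludes by topology, in the spirit of the $S^2$ computation in Appendix~A.4 of \cite{AK}. Since $\hyp$ is connected and simply connected (every loop on $\hyp$ is contractible), a flat connection has trivial holonomy, so parallel transport is path-independent and global parallel sections exist and are uniquely determined by their value at any basepoint $p_0 \in \hyp$. Choosing arbitrary initial data $(\dd t\vert_{p_0}, \dd V_a\vert_{p_0})$ and transporting produces a globally defined pair solving \cref{eq:scalarpot-system}, and the resulting scalar $\dd t$ satisfies $\dd T_{ab} = \dd D_a\dd D_b\dd t + \dd h_{ab}\dd t$. I would also remark that the $4$-parameter freedom in the basepoint data is exactly the kernel of the operator $\dd t \mapsto \dd D_a\dd D_b\dd t + \dd h_{ab}\dd t$, namely the translation solutions of \cref{eq:trans-cond}; this explains why $\dd t$ is determined only up to a translation and why, in particular, a traceless curl-free tensor such as $\dd E_{ab}$ admits a potential lying in the kernel of $\dd D^2 + 3$ rather than one recoverable from its (vanishing) trace.
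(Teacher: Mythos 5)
Your proposal is correct, but it proves the lemma by a genuinely different route than the paper. You prolong the second-order equation to a first-order system for the pair \((\dd t, \dd V_a)\) on the rank-\(4\) bundle \(\bb R \oplus T^*\hyp\), verify via the constant-curvature form \cref{eq:Riem-hyp} that the only integrability obstruction is \(\dd D_{[c}\dd T_{a]b}\) (your cancellation of the curvature terms against the \(\dd h_{ab}\dd t\) contributions is right), and then invoke simple connectedness of \(\hyp \cong \bb S^2\times\bb R\) to integrate the resulting flat connection. The paper instead argues ``dually'': it pairs \(\dd T_{ab}\) against the gradients \(\dd D^b\dd f\) of Spi-translations, observes that the curl-free hypothesis makes \(\dd T_{ab}\dd D^b\dd f\) a closed (hence exact) \(1\)-form \(\dd D_a\dd H\), and then uses the isomorphism between translations and their conformal Killing data \((\dd f\vert_p, \dd D^a\dd f\vert_p)\) at a point to peel \(\dd H\) apart into \(\dd t\,\dd f + \dd t_a\dd D^a\dd f\) and read off the potential. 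The two arguments are close cousins --- the \(4\)-dimensional space of translations \cref{eq:trans-cond} is exactly the space of parallel sections of your flat connection in the homogeneous case, which is also why your closing remark about the kernel is correct and matches the paper's comment on non-uniqueness --- but yours is the more systematic one: it is the standard prolongation argument for overdetermined systems and transfers verbatim to any simply connected constant-curvature space, whereas the paper's version (adapted from the \(\bb S^2\) argument of \cite{AK}) leans on structures (translations, conformal Killing data) already set up elsewhere in the text and needs only triviality of the first de~Rham cohomology rather than of \(\pi_1\), though on \(\hyp\) both hold.
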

\begin{proof}
Let \(\dd f \in \mf t\) be a Spi-translation so that\footnote{As shown in \cref{rem:trans-vectors} Spi-translations can also be represented as vectors in the tangent space at \(i^0\).}
\be\label{eq:trans-cond2}
    \dd D_a \dd D_b \dd f + \dd h_{ab} \dd f = 0\,.
\ee

Note that the vector field \(\dd Y^a \defn \dd D^a \dd f\) is a conformal Killing field on \(\hyp\). Any conformal Killing field is completely determined by its \emph{conformal Killing data} specified at some chosen point \(p \in \hyp\) \cite{AMA-isometries}, which in this case is given by
\be\label{eq:conf-Killing-data}
    (\dd Y^a, \dd D_{[a} \dd Y_{b]}, \dd D_a \dd Y^a, \dd D^a \dd D_b \dd Y^b ) \big\vert_p = (\dd D^a \dd f, 0, -3 \dd f, - 3 \dd D^a \dd f) \big\vert_p\,.
\ee
Thus, there is an isomorphism between the vector space of \(\dd f \in \mf t\) and the vector space of the conformal Killing data \(\dd f\vert_p\) and \(\dd D^a \dd f\vert_p\) at any chosen point \(p\).

Since \(\dd T_{ab}\) is symmetric and curl-free, using \cref{eq:trans-cond2} we have \(\dd D_{[c}( \dd T_{a]b} \dd D^b \dd f) = 0\). Thus, \(\dd T_{ab} \dd D^b \dd f\) is a closed \(1\)-form on \(\hyp\) and thus exact,\footnote{This follows from the fact that every \(1\)-loop in \(\hyp\) is contractible to a point and hence the first de Rahm cohomology group of \(\hyp\) is trivial.} that is, there exists a function \(\dd H\) such that
\be\label{eq:T-H}
    \dd T_{ab} \dd D^b \dd f = \dd D_a \dd H\,.
\ee
Thus, \(\dd T_{ab}\) can be viewed as a linear map from the vector space of Spi-translations to functions on \(\hyp\). Since the vector space of Spi-translations is isomorphic to the space of conformal Killing data \cref{eq:conf-Killing-data} specified at any point on \(\hyp\), there exists a function \(\dd t\) and a covector field \(\dd t_a\) on \(\hyp\) such that
\be
    \dd H = \dd t\dd f +  \dd t_a \dd D^a \dd f\,. 
\ee
Inserting this into \cref{eq:T-H} and using \cref{eq:trans-cond2} we get
\be
    \dd T_{ab} \dd D^b \dd f = (\dd D_a \dd t_b + \dd h_{ab} \dd t) \dd D^b \dd f + \dd f (\dd D_a \dd t - \dd t_a)\,.
\ee
Since the conformal Killing data \(\dd f\vert_p\) and \(\dd D^a \dd f\vert_p\) can be freely specified at any point it follows that \(\dd t_a = \dd D_a \dd t\) and 
\be
    \dd T_{ab} = \dd D_a \dd D_b \dd t + \dd h_{ab} \dd t\,.
\ee
\end{proof} 

Note that the potential \(\dd t\) is not uniquely determined,  since one is free to add solutions of \cref{eq:trans-cond2} to \(\dd t\) without affecting the tensor \(\dd T_{ab}\). Further, the potential is \emph{not} locally and covariantly determined by \(\dd T_{ab}\) and finitely many of its derivatives. In particular, even if \(\dd T_{ab}\) is the (direction-dependent) limit to \(i^0\) of some tensor field on spacetime, there may not exist any tensor on spacetime whose limit gives the potential \(\dd t\).

%%-----------------------------------------------------------
\subsection{Closed and exact forms}
\label{sec:Wald-thm}

For some results in the main paper we need to argue that certain \(2\)-forms on \(\hyp\) which are closed are also exact, so that their integral on cross-sections of \(\hyp\) vanishes. In general, not all closed \(2\)-forms on \(\hyp\) are exact since the topology of \(\hyp\) is \(\bb S^2 \times \bb R\) and the second de Rahm cohomology group is nontrivial. However, when the closed \(2\)-forms considered are local and covariant functionals of suitable fields (as described below) then they can be shown to be exact by a general theorem of Wald \cite{W-closed}.

In the theorem stated below, the differential forms \(\mu[\phi,\psi]\) under consideration will be functionals of two types of fields. The ``dynamical fields'', denoted by \(\phi\), are arbitrary cross-sections of some vector bundle, and we require that \(d\mu = 0\) for every cross-section \(\phi\). The form \(\mu\) also can depend on some ``background fields'', denoted by \(\psi\). The ``background fields'' \(\psi\) need not have a linear structure and are allowed to satisfy (possibly nonlinear) differential equations. Now we can state the theorem from \cite{W-closed}.

\begin{thm}[\cite{W-closed}]\label{thm:Wald}
Let \(\mu[\phi,\psi]\) be a \(p\)-form on a \(d\)-dimensional manifold \(M\) with \(p < d\), which is a local and covariant functional of a collection of two sets of fields \((\phi,\psi)\) (as described above) and finitely many of their derivatives on \(M\). Then, if for \emph{any} ``background fields'' \(\psi\) 
\begin{enumerate}
    \item \(d\mu[\phi,\psi] = 0\) for \emph{all} cross-sections of the vector bundle of ``dynamical fields'' \(\phi\) and
    \item \(\mu[\phi, \psi] = 0\) for the zero cross-section \(\phi = 0\)
\end{enumerate}
then there exists a \((p-1)\)-form \(\nu[\phi,\psi]\) which is a local and covariant functional of \((\phi,\psi)\) and finitely many of their derivatives such that \(\mu[\phi,\psi] = d \nu[\phi,\psi]\). That is the closed \(p\)-form \(\mu\) is also exact. 
\end{thm}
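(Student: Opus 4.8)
The plan is to prove this by combining two homotopy arguments: a \emph{scaling homotopy in field space} that reduces the problem to forms which are \emph{linear} in the dynamical field \(\phi\), followed by an \emph{algebraic (jet-space) Poincar\'e lemma} that trivializes the local cohomology of the exterior derivative in that linear setting. Throughout, the background fields \(\psi\) play no active role: they are carried along as inert spectators, and the entire argument is run for a fixed but arbitrary \(\psi\), so that the (possibly nonlinear) equations they satisfy never need to be scaled or varied.

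First I would exploit the vector-bundle, and hence linear, structure of the dynamical fields. Since both \(\phi = 0\) and \(\lambda\phi\) are admissible cross-sections for every \(\lambda \in [0,1]\), set \(\mu_\lambda \defn \mu[\lambda\phi, \psi]\). Hypothesis (2) gives \(\mu_0 = 0\), so
\be
  \mu[\phi,\psi] = \int_0^1 d\lambda~ \frac{\partial}{\partial\lambda}\, \mu_\lambda\,.
\ee
Because the exterior derivative \(d\) on \(M\) commutes with \(\partial_\lambda\), and \(d\mu_\lambda = 0\) by hypothesis (1) for each fixed \(\lambda\), the integrand \(\partial_\lambda \mu_\lambda\) is, for each \(\lambda\), a closed \(p\)-form that is \emph{linear} in \(\phi\) and in finitely many of its covariant derivatives, with coefficients built locally and covariantly from \(\lambda\phi\) and \(\psi\). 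Since the \(\lambda\)-integral of a local covariant expression is again local and covariant, it suffices to show that any closed, \(\phi\)-linear, local covariant \(p\)-form is \(d\) of a local covariant \((p-1)\)-form. The crucial point is that in this linear setting the field-independent remainder which generically obstructs exactness is automatically absent, because a form linear in \(\phi\) vanishes at \(\phi = 0\); thus hypothesis (2) is precisely what permits the integral representation above and, through it, discards the de~Rham cohomology of \(M\) that would otherwise appear.

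The core step is therefore an algebraic Poincar\'e lemma for local forms. Let \(\theta\) denote the closed, \(\phi\)-linear \(p\)-form just isolated, and let \(N\) be the order of the highest covariant derivative of \(\phi\) appearing in it. I would proceed by downward induction on \(N\). Reading \(d\theta = 0\) as an identity that must hold for \emph{all} \(\phi\), the leading symbol (the coefficient of the top-derivative term) is constrained so that the highest-order part of \(\theta\) is itself a total \(d\)-derivative of an explicit local covariant \((p-1)\)-form \(\eta\), linear in \(\phi\); subtracting \(d\eta\) lowers the derivative order by one. Covariance is maintained by performing each integration by parts through \(\nabla\) and by reabsorbing into the lower-order coefficients the curvature terms produced when covariant derivatives are commuted. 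Iterating reduces to \(N = 0\), an expression algebraic in \(\phi\), and a final application of the same mechanism — using \(p < d\) so that no top-degree obstruction survives — shows the remainder is exact. Equivalently, this is the statement that the horizontal cohomology of the variational bicomplex in degree \(0 < p < d\) reduces to the de~Rham cohomology of the base, which contributes nothing here since \(\theta\) is linear in \(\phi\).

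The main obstacle is this last step: establishing the local exactness while simultaneously guaranteeing that the potential is (i) constructed from only \emph{finitely many} derivatives of \((\phi,\psi)\) and (ii) genuinely covariant. The derivative-order reduction is where both requirements are tested, since every integration by parts must be carried out covariantly and the induction must terminate after finitely many stages; the hypothesis \(p < d\) is essential precisely to ensure the induction cannot stall against a top-form. Assembling the \(\eta\)'s produced at each stage, and then integrating the result back over \(\lambda\), yields the desired local covariant \((p-1)\)-form \(\nu[\phi,\psi]\) with \(\mu = d\nu\); by construction \(\nu\) is linear-in-\(\phi\) at leading order and in particular vanishes at \(\phi = 0\), consistent with hypothesis (2).
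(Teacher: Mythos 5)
The paper does not actually prove this theorem --- it is quoted from Wald's paper \cite{W-closed} and used as a black box --- so there is no in-paper proof to compare against; your two-stage strategy (the scaling homotopy \(\mu[\phi,\psi]=\int_0^1 d\lambda\,\partial_\lambda\mu[\lambda\phi,\psi]\), made possible by the linear structure of the dynamical fields and by hypothesis (2), followed by an exactness argument for the resulting \(\phi\)-linear closed form) is precisely the skeleton of Wald's published argument. Your first stage is complete and correct, including the observations that the coefficients of the linearized form may depend on \(\lambda\phi\) treated as additional background, and that hypothesis (2) is exactly what removes the would-be de~Rham obstruction \(\mu[0,\psi]\).

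The gap is in the second stage, which is where essentially all of the content of the theorem lives. Writing the \(\phi\)-linear form schematically as \(\theta_{a_1\cdots a_p}=\sum_{k\le N}A_{a_1\cdots a_p}{}^{b_1\cdots b_k I}\nabla_{(b_1}\cdots\nabla_{b_k)}\phi_I\), the identity \(d\theta=0\) for all sections \(\phi\) constrains only certain symmetrized--antisymmetrized projections of the coefficients, and your assertion that ``the highest-order part of \(\theta\) is itself a total \(d\)-derivative of an explicit local covariant \((p-1)\)-form'' is exactly the nontrivial algebraic lemma that must be proved: one has to show that a coefficient tensor satisfying the top-order closure constraint \(\delta^{(b_0}{}_{[a_0}A_{a_1\cdots a_p]}{}^{b_1\cdots b_N)I}=0\) necessarily takes the form \(\delta^{(b_1}{}_{[a_1}B_{a_2\cdots a_p]}{}^{b_2\cdots b_N)I}\) plus pieces that feed only into lower derivative orders after integration by parts, with \(B\) local and covariant. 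This is a statement about mixed-symmetry tensor decompositions in which the hypothesis \(p<d\) enters decisively (for \(p=d\) closedness is vacuous and the statement fails --- the cohomology there is the space of Lagrangians modulo divergences); you invoke it but do not establish it, and the appeal to ``the horizontal cohomology of the variational bicomplex'' is not an independent argument, since that computation \emph{is} the theorem being proved, and the standard bicomplex results moreover do not by themselves deliver the \emph{covariance} of the potential \(\nu\), which is one of the points of Wald's construction. To close the proof you would need to supply this algebraic lemma (or cite it explicitly) together with the bookkeeping showing that the downward induction terminates with a covariant potential at every stage.
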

Note that it is essential for this theorem that the ``dynamical fields'' have a linear structure as the cross-sections of some vector bundle and further, the \(p\)-form \(\mu\) must be closed for all possible cross-sections of this vector bundle, i.e., one must be able to freely specify the ``dynamical fields'' and all of their derivatives at any point of \(M\). In contrast, the ``background fields'' \(\psi\), need not have a linear structure and are allowed to satisfy differential equations, and in fact the set of ``background fields'' can also be empty. Further, the proof in \cite{W-closed} also provides a constructive procedure for finding the \((p-1)\)-form \(\nu\) though we will not need to use this construction.

For our applications of this theorem we will be concerned with closed \(2\)-forms on \(\hyp\). Using the volume element \(\dd\varepsilon_{abc}\) on \(\hyp\), we will write this \(2\)-form in terms of a covector \(\dd\mu_a\) such that \(\dd D^a \dd\mu_a = 0\). Then, from \cref{thm:Wald} we conclude that this \(2\)-form is exact and thus
\be\label{eq:Wald-hyp}
    \dd D^a \dd\mu_a[\phi,\psi] = 0 \implies \int_S \dd\varepsilon_2~ \dd u^a \dd \mu_a [\phi, \psi] = 0\,,
\ee
for any cross-section \(S\) of \(\hyp\) with \(\dd\varepsilon_2\) and \(\dd u^a\) being the area element and normal to \(S\). The choice of the ``dynamical fields'' \(\phi\) depends on the particular case. Since the fields  \(\dd E\), \(\dd K_{ab}\) and \(\dd\beta_{ab}\) satisfy differential equations of motion (\cref{eq:box-E,eq:K-eom,eq:div-beta}) they cannot be used as the ``dynamical fields''. Similarly, the Lorentz vector fields \(\dd X^a\) form a \(6\)-dimensional vector space and cannot be arbitrary sections of some vector bundle and also cannot be used as the ``dynamical fields''. Thus, these fields, along with the metric and volume form on \(\hyp\), will always be in the collection of ``background fields'' \(\psi\).

 However, the supertranslation symmetries \(\dd f\), the freedom in the conformal factor \(\dd\alpha\) (\cref{rem:conf}) and the scalar potential \(\dd k\) for \(\dd K_{ab}\) (when \(\dd B_{ab} = 0\)) are free functions on \(\hyp\) and will be used as ``dynamical fields'' in our applications of this theorem.

%%===========================================================
\section{Conformal transformation of the Lorentz charges}
\label{sec:conf-lorentz-charge}

In \cref{sec:conf-charges} we argued that under conformal transformations the Lorentz charge shifts by the charge of a supertranslation (\cref{eq:lor-charge-trans-summ}). In this appendix we collect the explicit computation of this transformation.

Using \cref{eq:conf-K,eq:conf-beta}, and that \(\dd E\) is conformally-invariant, we have the following transformation for the tensor \(\dd W_{ab}\) defined in \cref{eq:W-defn} under changes of the conformal factor
\be\label{eq:conf-W}
    \dd W_{ab} \mapsto \dd W_{ab} + \tfrac{1}{4} \dd\varepsilon_{cd(a} \dd D^c \lb[ \dd D_{b)} \dd E \dd D^d \dd\alpha + \dd D^d \dd E \dd D_{b)} \dd\alpha \rb] + \tfrac{1}{8} \dd\varepsilon_{abc} \dd D^c \dd E (\dd D^2 \dd\alpha + 3 \dd\alpha)\,. 
\ee
Thus, we have (note that the Lorentz vector does not transform under changes of the conformal factor \cref{rem:conf-symm})
\be\label{eq:W-trans1}
    \dd W_{ab} \dv X^b \mapsto \dd W_{ab} \dv X^b + \tfrac{1}{4} \dd\varepsilon_{cd(a} \dd D^c \dd T^d{}_{b)} \dv X^b + \tfrac{1}{8} (\dd D^2 \dd\alpha + 3 \dd\alpha) \dd D^b \dd E \dd D_a \dd X_b\,,
\ee
where we have defined the shorthand \(\dd T_{ab} \defn \dd D_a \dd E \dd D_b \dd\alpha + \dd D_b \dd E \dd D_a \dd\alpha\) and used the last identity in \cref{eq:X-dual-X}. Now using the identity \cref{eq:IBP} (with \(\dd X^a\) replaced by \(\dv X^a\)) we have
\be
    \int_S \dd\varepsilon_2~ \dd u^a \dd\varepsilon_{cd(a} \dd D^c \dd T^d{}_{b)} \dv X^b = - \int_S \dd\varepsilon_2~ \dd u^a \lb[ \tfrac{1}{2} \dd D_c \dd T^{cb} \dd D_a \dd X_b + \dd T_{ab} \dd X^b \rb]\,. 
\ee
A straightforward but tedious computation using the definition of \(\dd T_{ab}\), \cref{eq:box-E,eq:Riem-hyp,eq:box-X} gives
\be\begin{aligned}
    \int_S \dd\varepsilon_2~ \dd u^a \dd\varepsilon_{cd(a} \dd D^c \dd T^d{}_{b)} \dv X^b = \int_S \dd\varepsilon_2~ \dd u^a  \big[ & -\tfrac{1}{2} (\dd D^2 \dd\alpha + 3 \dd\alpha) (2 \dd E \dd X_a + \dd D^b \dd E \dd D_a \dd X_b ) \\
    & + \lb( \dd E \dd D_a \lie_{\dd X} \dd\alpha - \dd D_a \dd E \lie_{\dd X} \dd\alpha \rb) \big]\,, 
\end{aligned}\ee
where we have dropped terms that integrate to zero on \(S\). Using the above in \cref{eq:W-trans1} we get
\be\label{eq:W-trans2}\begin{aligned}
    \int_S \dd\varepsilon_2~ \dd u^a \dd W_{ab} \dv X^b \mapsto \int_S \dd\varepsilon_2~ \dd u^a \dd W_{ab} \dv X^b + \tfrac{1}{4} \int_S \dd\varepsilon_2~ \dd u^a \big[ & \lb( \dd E \dd D_a \lie_{\dd X} \dd\alpha - \dd D_a \dd E \lie_{\dd X} \dd\alpha \rb) \\
    & -  (\dd D^2 \dd\alpha + 3 \dd\alpha) \dd E \dd X_a \big]\,.
\end{aligned}\ee
Further, from \cref{eq:conf-K} we also have
\be
    - \tfrac{1}{8} \dd K \dd E \dd X_a \mapsto - \tfrac{1}{8} \dd K \dd E \dd X_a + \tfrac{1}{4} (\dd D^2 \dd\alpha + 3 \dd\alpha) \dd E \dd X_a\,.
\ee
Thus,
\be\label{eq:lor-charge-trans1}\begin{aligned}
    \int_S \dd\varepsilon_2~ \dd u^a \lb[\dd W_{ab} \dv X^b  - \tfrac{1}{8} \dd K \dd E \dd X_a \rb] &\mapsto \int_S \dd\varepsilon_2~ \dd u^a \lb[\dd W_{ab} \dv X^b  - \tfrac{1}{8} \dd K \dd E \dd X_a \rb] \\
    &\qquad + \tfrac{1}{4} \int_S \dd\varepsilon_2~ \dd u^a \lb( \dd E \dd D_a \lie_{\dd X} \dd\alpha - \dd D_a \dd E \lie_{\dd X} \dd\alpha \rb)\,.
\end{aligned}\ee
The Lorentz charge \cref{eq:lorentz-charge} then transforms as 
\be\label{eq:lor-charge-trans}\begin{aligned}
    \mc{Q}[\dd X^a; S] & \mapsto \mc Q[\dd X^a; S] - \frac{1}{16\pi}\int_S \dd\varepsilon_{2}~ \dd u^{a} \tfrac{1}{2} \lb( \dd E \dd D_a \lie_{\dd X} \dd\alpha - \dd D_a \dd E \lie_{\dd X} \dd\alpha \rb)\,. 
\end{aligned}\ee
Comparing to \cref{eq:st-charge}, we recognize the last integral above as the charge of the supertranslation \((-\half \Lie_{\dd X}\dd\alpha)\). Thus, the Lorentz charge shifts by the charge of a supertranslation under changes of the conformal factor as argued in \cref{sec:conf-charges}.

%%===========================================================
\section{Ambiguities in the Spi-charges}
\label{sec:amb}

In this section we analyze the ambiguities in our procedure to define the Spi charges. We show our Spi charges are unambiguously defined by the choice of the symplectic current for general relativity in \cref{eq:sympcurrent-H,eq:sympcurrent-H-simplified}.

Recall that our charges on a cross-section \(S\) of \(\hyp\) are defined by
\be\label{eq:WZ-charge2}
    \delta \mc Q[(\dd f, \dd X^a) ; S] \defn \int_S \dd\varepsilon_2 \dd u^a \dd Q_a(\delta g; (\dd f, \dd X^a)) + \int_S \dd X \cdot \dd\Theta(\delta g)\,, 
\ee
with \(\mc Q = 0 \) on Minkowski spacetime as the reference solution. The covector \(\dd Q_a\) is a local and covariant functional of its arguments and linear in the metric perturbations and the asymptotic symmetry satisfying \cref{eq:omega-Q}. While the \(3\)-form \(\dd\Theta\) is a symplectic potential for \(\pb{\dd\omega}\) satisfying \cref{eq:Theta-symppot}.

Given a fixed choice of the symplectic current, from \cref{eq:omega-Q,eq:Theta-symppot} the ambiguities in the choice of \(\dd Q_a\) and the \(\dd\Theta\) are given by
\be\begin{aligned}
    \dd Q_a(g;\delta g; (\dd f, \dd X)) &\mapsto \dd Q_a(g;\delta g; (\dd f, \dd X)) + \dd \mu_a(g;\delta g; (\dd f, \dd X))\,, \\
    \dd\Theta(\delta g) &\mapsto \dd\Theta(\delta g) + \dd\varepsilon_3 \delta \dd \Xi(g)\,,
\end{aligned}\ee
where the covector \(\dd \mu_a (g;\delta g; (\dd f, \dd X)) \) is a local and covariant functional of its arguments and linear in the metric perturbations and the asymptotic symmetry, and further satisfies
\be\label{eq:mu-cond}
    \dd D^a \dd \mu_a (g;\delta g; (\dd f, \dd X)) = 0\,,
\ee
for all background spacetimes and perturbations (satisfying the background and linearized equations of motion respectively) and all asymptotic symmetries. While the function \(\dd\Xi\) is any local and covariant function of the background spacetime fields on \(\hyp\).

Under these ambiguities the definition of \(\delta\mc Q\) (\cref{eq:WZ-charge2}) changes by
\be\label{eq:WZ-amb}
    \delta \mc Q[(\dd f, \dd X^a) ; S] \mapsto \delta \mc Q[(\dd f, \dd X^a) ; S] + \int_S \dd\varepsilon_2 \dd u^a \dd \mu_a(g;\delta g; (\dd f, \dd X)) - \delta \int_S \dd\varepsilon_2 \dd u^a \dd X_a \dd\Xi(g)\,.
\ee
Since the integrated charge \(\mc Q\) is fixed by the requirement that it vanish on Minkowski spacetime (where \(\dd E = \dd \beta_{ab} = 0\)), we only need to analyze the ambiguities in \(\delta\mc Q\).  

We now argue that the last two integrals above must vanish under the following assumptions
\begin{enumerate}
    \item \(\dd\mu_a\) and \(\dd\Xi\) are local and covariant functionals of their arguments as mentioned above with \(\dd\mu_a\) satisfying \cref{eq:mu-cond}.
    \item The Lorentz charge \(\mc Q[(\dd f=0, \dd X^a) ; S]\) must match the Ashtekar-Hansen expression when the conformal factor is chosen such that \(\dd K_{ab} = 0\).
    \item The total charge \(\mc Q[(\dd f, \dd X^a) ; S]\) of any Spi symmetry is conformally-invariant.
\end{enumerate}

Consider first the \(\dd\mu_a\)-ambiguity and the case of a pure supertranslation \((\dd f, \dd X^a = 0)\). Since the ambiguity \(\dd \mu_a\) is linear in \(\dd f\) we have \(\dd \mu_a(g; \delta g; \dd f = 0) = 0 \). Further since \(\dd\mu_a\) is divergence-free (\cref{eq:mu-cond}), we can use \cref{thm:Wald} in the form \cref{eq:Wald-hyp} with \(\dd f\) as ``dynamical field'' to conclude that the second integral on the right-hand-side of \cref{eq:WZ-amb} vanishes on any cross-section \(S\) for a supertranslation.

Next consider the \(\dd\mu_a\)-ambiguity with a Lorentz transformation \( (\dd f =0 , \dd X^a)\). Since the Lorentz vector fields \(\dd X^a\) form a \(6\)-dimensional vector space and are not allowed to be arbitrary cross-sections of a vector bundle, we cannot use \(\dd X^a\) as the ``dynamical'' fields in \cref{thm:Wald}. So instead, we proceed another in another way. Consider the scalar potential \(\dd k\) for the tensor \(\dd K_{ab}\) (\cref{eq:K-potential}). Since \(\dd k\) is a completely free function on \(\hyp\) it is allowed to be an arbitrary cross-section of a vector bundle on \(\hyp\). Further, whenever \(\dd k=0\) we have \(\dd K_{ab} = 0\) and by our assumption the Lorentz charge must the one found by Ashtekar and Hansen. Thus, the ambiguity \(\dd \mu_a = 0\) whenever \(\dd k = 0\) for all background spacetimes and all Lorentz vector fields \(\dd X^a\). Now using \(\dd k\) as the ``dynamical field'', from \cref{thm:Wald}  in the form \cref{eq:Wald-hyp}, we conclude again that the second integral on the right-hand-side of \cref{eq:WZ-amb} vanishes on any cross-section \(S\) for a Lorentz symmetry. Thus, the \(\dd\mu_a\)-ambiguity does not affect \(\delta \mc Q\).

Finally, consider the \(\dd\Xi\)-ambiguity in the choice of \(\dd\Theta\). In \cref{sec:conf-charges} we showed that the total charge \(\mc Q\) for any Spi-symmetry \((\dd f, \dd X^a)\) is invariant under conformal transformations with our choice of \(\dd\Theta\) (\cref{eq:Theta-choice}) which implies that the charge of a ``pure Lorentz'' symmetry must shift by a charge of a supertranslation under changes of the conformal factor (see \cref{eq:lor-charge-trans-summ}). It follows that for the redefined Lorentz charge to transform correctly the integral contributed by \(\dd\Xi\) in \cref{eq:WZ-amb} must be conformally-invariant. Further, for the redefined Lorentz charge to match the one found by Ashtekar and Hansen the integral contributed by \(\dd\Xi\) in \cref{eq:WZ-amb} must vanish whenever \(\dd K_{ab} = 0\). Since \(\dd K_{ab}\) can be chosen to vanish by a choice of conformal factor (see \cref{rem:GR-gauge-choice}) this implies the \(\dd\Xi\)-ambiguity does not affect \(\delta\mc Q\).

In summary, our charges are unambiguously determined by the pullback of the symplectic current \cref{eq:sympcurrent-H-simplified}.

Here we remark that the symplectic current \(3\)-form itself is \emph{not} uniquely determined by the Lagrangian of the theory but is ambiguous up to
\be
    \omega(g; \delta_1 g, \delta_2 g) \mapsto \omega(g; \delta_1 g, \delta_2 g) + d \lb[ \delta_1 \nu(g;\delta_2 g) - \delta_2 \nu(g;\delta_1 g) \rb]\,,
\ee
where \(\nu(g;\delta g)\) is a local and covariant \(2\)-form and is linear in the perturbation \(\delta g\). We have not analyzed the effect of this ambiguity on our charges.

%%===========================================================
\section{Lorentz charges with \(\dd B_{ab} \neq 0\)}
\label{sec:new-beta}

In \cref{sec:lorentz-charge} to define the Lorentz charges at \(i^0\) we imposed the condition \(\dd B_{ab} =0 \) to gain access to the ``subleading'' magnetic part \(\dd\beta_{ab}\) of the asymptotic Weyl tensor (see \cref{eq:beta-defn}). In this section we show how we can define a ``subleading'' magnetic Weyl tensor and the Lorentz charges even when \(\dd B_{ab} \neq 0\).

If \(\dd B_{ab}\) does not vanish, then the ``subleading'' piece as defined by \cref{eq:beta-defn} does not exist in the limit. However, consider the derivative of the magnetic part of the Weyl tensor along \(\eta^a\):
\be
    \lim_{\to i^0} \Omh \eta^e \nabla_e (\Omh * C_{acbd} \eta^c \eta^d) = \dd\eta^e \dd\partial_e \dd B_{ab} = 0\,.
\ee
Since the limit of the above quantity vanishes we can now demand that its ``next order'' part exist, that is, 
\be\label{eq:H-defn}
    \dd H_{ab}(\vec\eta) \defn \lim_{\to i^0} \eta^e \nabla_e (\Omh * C_{acbd} \eta^c \eta^d) \quad\text{is } C^{>-1}\,.
\ee

The tensor field \(\dd H_{ab}(\vec\eta)\) is not tangential to \(\hyp\). We can compute
\be\label{eq:H-eta}\begin{aligned}
    \dd H_{ab}(\vec\eta) \dd\eta^b & = \lim_{\to i^0} \eta^b \eta^e \nabla_e (\Omh * C_{acbd} \eta^c \eta^d)
    = - \lim_{\to i^0}  \eta^e \nabla_e \eta^b (\Omh * C_{acbd} \eta^c \eta^d) \\
    & = \tfrac{1}{4} \dd B_{ab} \dd D^b \dd E\,,
\end{aligned}\ee
where in the first line we have used the fact that \(*C_{abcd}\) is antisymmetric in the last two indices and to get the second line we replaced the derivative of \(\eta^a\) using the Einstein equation \cref{eq:EE}, and used \cref{eq:EB-defn,eq:h-eta-S}. Note that \(\dd H_{ab}(\vec\eta)\dd\eta^a \dd\eta^b = 0\), and thus the only remaining part of \(\dd H_{ab}\) is its projection to \(\hyp\) on both indices. We use this projection to define the ``subleading'' magnetic part of the Weyl tensor, that is, instead of \cref{eq:beta-defn} we now use
\be\label{eq:new-beta-defn}
    \dd \beta_{ab} \defn \dd h_a{}^c \dd h_b{}^d \dd H_{cd}(\vec\eta)\,.
\ee
As before \(\dd\beta_{ab}\) is a symmetric and traceless tensor field on \(\hyp\). Note that when \(\dd B_{ab}=0\), this new definition is completely equivalent to the previous one in \cref{eq:beta-defn} (see also \cite{AH}).

The generalization of the equation of motion \cref{eq:div-beta} is rather tedious to obtain. We want to compute
\be\begin{aligned}
    \dd\partial^b \dd H_{ab} & = \lim_{\to i^0} \Omh \nabla^b \lb[ \eta^e \nabla_e (\Omh * C_{acbd} \eta^c \eta^d) \rb] \\
    & = \lim_{\to i^0} \lb[ (\nabla^b \eta^e) \Omh \nabla_e (\Omh * C_{acbd} \eta^c \eta^d) + \Omh \eta^e \nabla^b \nabla_e (\Omh * C_{acbd} \eta^c \eta^d) \rb]\,.
\end{aligned}\ee
In the first term we substitute the derivative of \(\eta^a\) using \cref{eq:EE} and then evaluate the limit of the expression using \cref{eq:E-B-decomp,eq:potentials-defn,eq:h-eta-S,eq:H-eta}. For the second term on the right-hand-side, we first commute the derivatives and introduce terms involving the the Riemann tensor of the unphysical spacetime. The term with the derivatives \(\nabla^b\) and \(\nabla_e\) interchanged vanishes in the limit while the Riemann tensor terms can be computed by decomposing the Riemann tensor in terms of the Weyl tensor \(C_{abcd}\) and \(S_{ab}\) (\cref{eq:S-defn}). Then we can evaluate the limit using \cref{eq:E-B-decomp,eq:potentials-defn,eq:h-eta-S}. The final limit gives the equation
\be
    \dd\partial^b \dd H_{ab} = - \tfrac{1}{4} \dd\partial_c \dd B_{ab} \dd K^{bc} - \tfrac{1}{4} \dd\partial^b \dd B_{ab} \dd E + \tfrac{5}{4} \dd B_{ab} \dd D^b \dd E + \tfrac{1}{4} \dd\varepsilon_{cda} \dd E^c{}_b \dd K^{db} - \tfrac{1}{4} \dd\eta_a \dd B_{bc} \dd K^{bc} - \dd\eta_a \dd B_{bc} \dd E^{bc}\,.
\ee
Using \cref{eq:H-eta} and the equation of motion \cref{eq:EB-curl} it can be verified that the contraction of the above equation with \(\dd\eta^a\) is trivial. Projecting the index \(a\) on to \(\hyp\) we then get the equation of motion for \(\dd\beta_{ab}\) as
\be\label{eq:new-div-beta}
    \dd D^b \dd \beta_{ab} = \tfrac{1}{4} \dd\varepsilon_{cda} \dd E^c{}_b \dd K^{bd} + \tfrac{5}{4} \dd B_{ab} \dd D^b \dd E - \tfrac{1}{4} \dd D_a \dd B_{bc} \dd K^{bc}\,,
\ee
which reduces to \cref{eq:div-beta} when \(\dd B_{ab} = 0\).

To define the Lorentz charge we now construct the generalization of the tensor \(\dd W_{ab}\) (\cref{eq:W-defn}). Note that the only essential properties of \(\dd W_{ab}\) used to obtain \cref{eq:symp-lorentz-W} are that \(\dd W_{[ab]} = - \tfrac{1}{16} \dd{\varepsilon}_{abc}\dd{K}\dd D^{c}\dd{E}\) and \(\dd D^a \dd W_{ab} = 0\) using the equation of motion for \(\dd\beta_{ab}\). We will further require that \(\dd W_{ab}\) is also traceless.

To find such a \(\dd W_{ab}\), first note that the last term in \cref{eq:new-div-beta} can be written as the divergence of a symmetric tensor using \cref{eq:EB-div,eq:EB-potentials,eq:K-eom}
\be\label{eq:D-B-K}
    -\tfrac{1}{4} \dd D_b \dd B_{ac} \dd K^{ac} = -\tfrac{1}{16} \dd D^a \lb[ - 2 \dd B_{ab} \dd K +  2\dd h_{ab} \dd B_{cd} \dd K^{cd} - \dd\varepsilon_{cd(a} \dd K^c{}_{b)} \dd D^d \dd K - \dd\varepsilon_{cd(a} \dd D_{b)} \dd K^{ce} \dd K^d{}_e \rb]\,.
\ee
Note that the tensor in the square brackets is \emph{not} traceless. However, we can add to it the following symmetric tensor 
\be
    -\tfrac{5}{8} \lb[ 2 \dd B_{c(a} \dd K_{b)}^c - \dd h_{ab} \dd B_{cd} \dd K^{cd} - \dd B_{ab} \dd K \rb]\,,
\ee
which has vanishing divergence and thus does not affect the left-hand-side. With this we define 
\be\label{eq:new-W-defn}\begin{aligned}
    \dd W_{ab} & \defn \dd{\beta}_{ab} + \tfrac{1}{8} \dd{\varepsilon}_{cd(a} \dd D^{c} \dd{E} \dd K^d{}_{b)} - \tfrac{1}{16} \dd{\varepsilon}_{abc}\dd{K}\dd D^{c}\dd{E}  \\
    &\qquad - \tfrac{3}{2} \dd B_{ab} \dd E + \tfrac{5}{4} \dd B_{c(a} \dd K_{b)}^c - \tfrac{1}{2} \dd h_{ab} \dd B_{cd} \dd K^{cd} - \tfrac{3}{4} \dd B_{ab} \dd K \\
    &\qquad - \tfrac{1}{16} \dd\varepsilon_{cd(a} \dd D_{b)} \dd K^{ce} \dd K^d{}_e - \tfrac{1}{16} \dd\varepsilon_{cd(a} \dd K^c{}_{b)} \dd D^d \dd K\,,
\end{aligned}\ee
which satisfies
\be
    \dd W_{[ab]} = - \tfrac{1}{16} \dd{\varepsilon}_{abc}\dd{K}\dd D^{c}\dd{E} \eqsp \dd D^a \dd W_{ab} = 0 \eqsp \dd h^{ab} \dd W_{ab} = 0\,.
\ee
Then the Lorentz charge formula takes the same form as in \cref{eq:lorentz-charge} with \(\dd W_{ab}\) now defined as in \cref{eq:new-W-defn}. The flux of this charge is still given by the expression \cref{eq:lorentz-flux}.

Note that when \(\dd B_{ab} = 0\), the second line in \cref{eq:new-W-defn} vanishes, but the terms in the third line are nonvanishing in general; denote these terms by a symmetric tensor \(\dd T_{ab}\). It follows from \cref{eq:D-B-K} that \(\dd T_{ab}\) is divergence-free when \(\dd B_{ab} = 0\). Thus \(\dd D^a (\dd T_{ab} \dv X^b) = 0\) and \(\dd T_{ab} \dv X^b = 0\) when the scalar potential \(\dd k\) for \(\dd K_{ab}\) (\cref{eq:K-potential}) vanishes. Using the scalar potential \(\dd k\) as the ``dynamical field'' in \cref{thm:Wald} it follows from \cref{eq:Wald-hyp} that these terms do not contribute to the Lorentz charge expression. Thus, when \(\dd B_{ab} = 0\) the Lorentz charge defined using \cref{eq:new-W-defn} coincides with the one defined previously in \cref{sec:lorentz-charge}.

Under conformal transformations we can show that
\be
    \dd\beta_{ab} \mapsto \dd\beta_{ab} - \dd\varepsilon_{cd(a} \dd E^c{}_{b)} \dd D^d \dd\alpha - \tfrac{3}{2} \dd B_{ab} \dd\alpha + \tfrac{1}{2} \dd D_c \dd B_{ab}  \dd D^c \dd\alpha\,,
\ee
and that \cref{eq:new-div-beta} is invariant. The explicit computation of the transformation of the Lorentz charge presented in \cref{sec:conf-lorentz-charge} now becomes much more complicated. However, the general argument presented in \cref{sec:conf-charges} still holds. Thus, even without the assumption \(\dd B_{ab} = 0\) we have a satisfactory definition of Lorentz charges at spatial infinity.\\

The Lorentz charges for \(\dd B_{ab} \neq 0\) case were also derived by Comp\`ere and Dehouck \cite{CD} (with \(\dd K = 0\)) using an asymptotic expansion in Beig-Schmidt coordinates which in the unphysical spacetime coordinates used in \cref{sec:coord} reads
\be\label{eq:g-exp-CD}\begin{aligned}
    g_{ab} & \equiv \lb[ 1 + \sigma \rho + o(\rho) \rb]^2 d\rho^2 + \rho\, o(\rho) d \rho d y^{a} \\
    &\quad +\rho^{2}\lb[ h_{ab}^{(0)} + \rho h_{ab}^{(1)} - \rho^2 \ln\rho~ i_{ab} + \rho^2 h_{ab}^{(2)} + o(\rho^2) \rb] dy^{a} dy^{b}\,.
\end{aligned}\ee
For \(\dd\beta_{ab}\), as defined by \cref{eq:new-beta-defn}, to exist we set the logarithmic term \(i_{ab} = 0\). With this condition the \(\dd\beta_{ab}\) is related to the curl of the metric coefficient \(h_{ab}^{(2)}\) with additional terms whose form is rather complicated (as compared to \cref{eq:h2-beta} when \(\dd B_{ab} = 0\)). Note that with \(\dd K = 0\), our \(\dd W_{ab}\) is a symmetric, divergence-free and traceless tensor and thus we expect that our charge expression in this case matches with the one derived in \cite{CD} in terms of \(h_{ab}^{(2)}\), but we have not shown this explicitly. 

When the logarithmic term \(i_{ab}\) does not vanish, our definition \cref{eq:new-beta-defn} cannot be used for the ``subleading'' magnetic part of the Weyl tensor. We have not explored this case in detail but we expect the following strategy to be useful. We can assume that
\be
    \Omh *C_{acbd}\eta^c \eta^d = B_{ab} + \Omh \ln\Omh b_{ab} + \Omh \beta_{ab} + o(\Omh)\,,
\ee
where each of the tensors \(B_{ab}\), \(b_{ab}\) and \(\beta_{ab}\) are symmetric and orthogonal to \(\eta^a\) and admit a \(C^{>-1}\) limit to \(i^0\). Using such an expansion in the Hodge dual of \cref{eq:curl-weyl} we can derive the equations of motion for the limits of \(B_{ab}\), \(b_{ab}\) and \(\beta_{ab}\). Since the expression for the symplectic current \cref{symp-lorentz} is unchanged, we can use these equations of motion to define an analogue of the tensor \(\dd W_{ab}\) and the Lorentz charges. From the point of view of matching these charges to those on null infinity, we expect that the spacetimes with such a logarithmic behaviour at spatial infinity would correspond to the polyhomogenous spacetimes at null infinity defined in \cite{CMS-poly}.

%%THE END

%\newpage

\bibliographystyle{JHEP}
\bibliography{spi-charges}      
\end{document}